\newtheorem{theorem}{\indent Theorem}[section]
\newtheorem{lemma}[theorem]{\indent Lemma}
\newtheorem{corollary}[theorem]{\indent Corollary}
\newtheorem{EXAMPLE}{\indent Example}[section]
\newtheorem{remark}[theorem]{\indent Remark}
\newtheorem{definition}[theorem]{\indent Definition}
\newenvironment{example}{\begin{EXAMPLE}\rm}{\rm\end{EXAMPLE}}
\newcommand{\code}{{\mathcal{C}}}
\newcommand{\cL}{{\mathcal{L}}}
\newcommand{\cH}{{\mathcal{H}}}
\newcommand{\cE}{{\mathcal{E}}}
\newcommand{\decoder}{{\mathcal{D}}}
\newcommand{\cB}{{\mathcal{B}}}
\newcommand{\cA}{{\mathcal{A}}}
\newcommand{\cD}{{\mathcal{D}}}
\newcommand{\cK}{{\mathcal{K}}}
\newcommand{\cP}{{\mathcal{P}}}
\newcommand{\cS}{{\mathcal{S}}}
\newcommand{\csf}{{\mathsf{c}}}
\newcommand{\distance}{{\mathsf{d}}}
\newcommand{\Distance}{{\mathsf{D}}}
\newcommand{\cc}{{\mathsf{C}}}
\newcommand{\Code}{{\mathbb{C}}}
\newcommand{\ff}{{\mathbb{F}}}
\newcommand\nn{{\mathbb N}}
\newcommand{\bldA}{{\mbox{\boldmath $A$}}}
\newcommand{\blde}{{\mbox{\boldmath $e$}}}
\newcommand{\bldbeta}{{\mbox{\boldmath $\beta$}}}
\newcommand{\bldG}{{\mbox{\boldmath $G$}}}
\newcommand{\bldM}{{\mbox{\boldmath $M$}}}
\newcommand{\bldv}{{\mbox{\boldmath $v$}}}
\newcommand{\bldu}{{\mbox{\boldmath $u$}}}
\newcommand{\bldw}{{\mbox{\boldmath $w$}}}
\newcommand{\bldx}{{\mbox{\boldmath $x$}}}
\newcommand{\bldz}{{\mbox{\boldmath $z$}}}
\newcommand{\bldgamma}{{\mbox{\boldmath $\gamma$}}}
\newcommand{\bldzero}{{\mbox{\boldmath $0$}}}
\newcommand{\define}{\stackrel{\mbox{\tiny $\triangle$}}{=}}
\newcommand{\equival}{\stackrel{\mbox{$\cdot$}}{=}}
\newlength{\Initlabel}
\newlength{\Algwidth}
\title{Hybrid Noncoherent Network Coding
\thanks{%
    V. Skachek is with the Department of Electrical and Computer Engineering, 
    McGill University, 3480 University Street, Montr\'eal, QC H3A 2A7, Canada. 
    This work was done while he
    was with the Coordinated Science Laboratory, University of Illinois at Urbana-Champaign, 1308 W. Main Street, Urbana, IL 61801, USA.}      
\thanks{%
    O. Milenkovic and A. Nedi\'c are with the Coordinated Science Laboratory, University of Illinois at Urbana-Champaign, 
    1308 W. Main Street, Urbana, IL 61801, USA.}  
\thanks{%
    This work is supported by the Air Force Office for Scientific Research and NSF Grants CCF 0939370 and CCF 1117980.}
}  
\author{Vitaly Skachek, Olgica Milenkovic, and Angelia Nedi\'c  }
\date{}    
\begin{document}

% \textbf{Keywords:} Pseudocodewords; Fundamental cone; Linear-programming decoder.

\maketitle

\begin{abstract}

We describe a novel extension of subspace codes for noncoherent networks, 
suitable for use when the network is viewed as a communication system that 
introduces both dimension and symbol errors.  We show that when symbol erasures occur in a significantly large number of 
different basis vectors transmitted through the network and when the min-cut of the network is much smaller then the length of 
the transmitted codewords, the new family of codes outperforms their subspace code counterparts. 

For the proposed coding scheme, termed hybrid network coding, we derive two upper bounds on the size of the codes. 
These bounds represent a variation of the Singleton and of the sphere-packing bound. We show that a simple concatenated scheme 
that consists of subspace codes and Reed-Solomon codes is asymptotically 
optimal with respect to the Singleton bound. Finally, we describe two efficient decoding algorithms for concatenated subspace codes that
in certain cases have smaller complexity than their subspace decoder counterparts. 
\end{abstract}

\section{Introduction}
Network coding is a scheme introduced by Ahlswede \emph{et al.}~\cite{Ahlswede} for efficient communication over networks with transmission
bottlenecks. The authors of~\cite{Ahlswede} showed that under a broadcast scenario in networks, the maximal theoretically achievable 
communication rate -- called the \emph{capacity} of the network -- can be characterized by 
minimal cuts in the network and achieved by appropriate coding methods. 

In the last decade, network coding became a focal point of research in coding theory. 
There exists a variety of network coding solutions currently used in practice: random 
network coding approach was first proposed in~\cite{Ho}; algebraic coding was shown to achieve 
the capacity of a class of networks in~\cite{Koetter-Medard, Li};  non-linear approaches were also studied in~\cite{Zeger}.

The use of network coding for error-correction was first proposed in~\cite{Cai}.   
When the network topology is not known, or when it changes with time, it was suggested in~\cite{KK} 
to use subspace coding for joint error-correction and network coding, and suitable codes were constructed 
therein. Subspace codes are closely related to the rank-metric codes, also extensively studied in the codning literature~\cite{Gabidulin, Roth, Gadouleau}. 
The parameters of the codes in~\cite{KK} were further improved in a series of subsequent works, including~\cite{Etzion-Silberstein},~\cite{Bossert},~\cite{Kohnert},~\cite{Felice},~\cite{Skachek},~\cite{Trautmann}. 
Bounds on the parameters of subspace codes were derived in~\cite{Etzion-Vardy} and~\cite{Xia}.
It should also be mentioned that the subspace codes, which were proposed in~\cite{KK} for noncoherent network coding, were studied independently in the area of cryptography
under the name \emph{authentication codes}~\cite{Naini}. 

In our work, we follow the line of research started in~\cite{KK}. More specifically, we consider 
error-correction for a special case of network coding, suitable for practical applications in which the topology of the network in not known or changes with time. This type of scheme is, as already mentioned, 
known as coding for {\bf noncoherent networks}. Currently, the only known approach for noncoherent network coding 
utilizes subspace codes. 

Subspace codes for noncoherent network coding are based on the idea that the transmitted data vectors can be associated 
with linear vector subspaces. Linear network coding does not change the information about the subspaces, since it only allows for linear combining of the transmitted bases
vectors. Hence, if there are no errors, the receiver obtains uncompromised information regarding the transmitted subspace. 
The transmitted subspaces can only be modified within the network through the introduction of errors.  
In order to protect the transmitted information one has to add carefully structured redundancy into the subspace messages. 

In the context of the work~\cite{KK}, the errors are modeled as {\bf dimension gains} and 
{\bf dimension losses}. These notions, although of theoretical value, may appear rather abstract in certain 
networking applications, where packets (symbols or collections of symbols) are subjected to erasures or substitution errors. 
One fundamental question remains: how is one to interpret the notion of dimension gains and losses in terms of symbol errors and erasures, and what kind of errors and erasures constitute 
dimension gains and losses? 

We propose {\bf a hybrid approach} to noncoherent network coding, which attempts to connect the notions of dimension loss 
and gain with those of individual symbol errors and erasures. The crux of our approach is to consider network coding where 
dimension gains and losses, in addition to individual symbol errors and erasures, are all possible. 
This allows us to study the trade-offs between the required overhead in the network layer aimed at correcting dimension gains/losses, 
and the overhead in the physical layer designated to correcting symbol erasures and errors.

Our main result shows that by incorporating symbol error-correcting mechanism into subspace codes, one can increase the number of tolerable dimension gains and losses, without compromising 
the network throughput. Hence, the proposed approach leads to an increase in the overall number of 
correctable errors in the subspace-based scheme akin to~\cite{KK}. 

In order to illustrate our approach, consider the following straightforward example. 
Assume the case of a noncoherently coded network in which arbitrary (unknown) ten symbols are erased from the 
basis vectors representing the message. 
The first question is how many dimension losses should be considered in the model of~\cite{KK}? 
One reasonable way to look at it is to assume the worst-case scenario where each symbol erasure
introduces one dimension loss, and each error introduces a simultaneous dimension loss and gain. 
Consequently, ten symbol erasures would amount to ten dimension losses.
However, if there were an alternative way to correct some of these symbol erasures or errors, 
the effective number of dimension losses and gains may become significantly smaller. 
In the example, correcting five symbol erasures would, in the best case, reduce the burden of 
subspace codes in terms of dimension loss recovery by five dimensions. And, at least at first glance, correcting
symbol errors appears to be a task easier to accomplish than correcting dimension errors.

We therefore pose the following questions: what are the fundamental performance bounds for noncoherent network coding schemes, consequently termed hybrid network codes, capable of correcting symbol erasures 
and errors on one side, and dimension gains and losses on the other side? What is the optimal rate allocation scheme for hybrid network codes with respect to 
dimension losses/gains and symbol errors/erasures?
What is the optimal ratio between the two allocation rates and how can it be achieved practically? How does one efficiently correct errors in this new scheme? 
The work in this paper is aimed at answering these questions. 

There are various potential applications for hybrid network codes~\cite{Katti}. Hybrid codes can be useful in networks where no link-layer error-correction is performed. 
Such networks include sensor networks for which the computational power of ``intermediate nodes'' is not sufficiently large. This prevents error-correction to be 
performed before the errors propagate through the network.
Hybrid codes can also be used in 
networks for which a physical layer packet is very small, the network layer packet consists of many physical layer packets, and the packet 
may be regarded as a single symbol. In this case, if an error in the packet cannot be decoded, a symbol error is declared 
which is subsequently ``transferred'' into a dimension loss/gain.

We would also like to point out that upon publication of the preliminary results on our hybrid network coding, two other interesting directions related to this model were 
proposed in literature. A concatenation of subspace codes and algebraic codes was studied in the context of coding 
for distributed data storage in~\cite{Silberstein-Vish}. 
More specifically, it was suggested in~\cite{Silberstein-Vish} to use concatenation of Gabidulin codes~\cite{Gabidulin} 
with classical maximum distance separable code for correction of adversarial errors in distributed storage systems.
In~\cite{Jaggi}, the authors studied symbol-level error-correction in random codes used over 
both coherent and non-coherent networks. However, in that work, no explicit constructions of codes were presented.

The paper is organized as follows. The notation and prior work are discussed in Section~\ref{sec:prior-work}. 
In the sections that follow, we define hybrid codes that can be used for simultaneous correction 
of dimension losses/gains and symbol erasures in noncoherent networks. 
More specifically, the basic code requirements and parameters are presented and described in Section~\ref{sec:code-L}. 
Two upper bounds on the size of hybrid codes, the Singleton bound and the sphere-packing bound, are presented in Section~\ref{sec:bounds}. 
A straightforward concatenated code construction appears in Section~\ref{sec:construct}. The analysis of code parameters and the comparison 
with known subspace code constructions appear in Section~\ref{sec:construction}. 
The decoding algorithm for the proposed codes 
is presented in Section~\ref{sec:decoding-1}.  
In Section~\ref{sec:decoding-2} we show that the same codes can also be used for simultaneous correction of dimension losses and symbol erasures/errors, and 
state some results analogous to those in Sections~\ref{sec:code-L}-\ref{sec:decoding-1}. 
Finally, we discuss some results related to simultaneous correction of both dimension losses/gains and symbol erasures/errors. 

%We present generalizations of K\"otter-Kschischang codes for noncoherent networks
%(these codes were also studied by Wang \emph{et al.} 
%in the context of authentication problems). 

%--------------------------------------------------------------

\section{Notation and Prior Work}
\label{sec:prior-work}

Let $W$ be a vector space over a finite field $\ff_q$, where $q$ is a power of a prime number. 
For a set of vectors $S \subseteq W$, we use $\langle S \rangle$ to denote the linear span of the vectors in $S$. 
We also use the notation $\langle \bldu_1, \bldu_2, \cdots, \bldu_\ell \rangle$
for the liner span of the set of vectors $\{ \bldu_1, \bldu_2, \cdots, \bldu_\ell \}$. 
Let $\nn$ be the set of positive integer numbers. 
We write $\bldzero^m$ to denote the all-zero vector of length $m$, for any $m \in \nn$.
When the value of $m$ is clear from the context, we sometimes write $\bldzero$ rather than $\bldzero^m$. 
We also denote by $\blde_i \define (\underbrace{0,\ldots,0}_{i-1},1,\underbrace{0,\ldots,0}_{n-i}) \in \ff_q^n$ a unity vector which has a one in position $i \in \nn$ and zeros in all other positions. 
The length of the vector will be clear from the context.  

Let $V, U \subseteq W$ be linear subspaces of $W$. 
We use the notation $\dim(V)$ for the dimension of $V$. We denote 
the sum of two subspaces $U$ and $V$ 
as $U+V = \{ \bldu + \bldv \; : \; \bldu \in U, \bldv \in V \}$. If $U \cap V = \{ \bldzero \}$, then
for any $\bldw \in U + V$ there is a unique representation in terms of the sum of two vectors $\bldw = \bldu + \bldv$, where $\bldu \in U$ and 
$\bldv \in V$. In this case we say that $U+V$ is a direct sum, and denote it by $U \oplus V$. 
It is easy to check that $\dim(U \oplus V) = \dim(U) + \dim(V)$.  

Let $W = U' \oplus U''$. For $V \subseteq W$ we define a projection of $V$ onto $U'$, denoted by $V|_{U'}$, as follows:
\[
V|_{U'} = \{ \bldu_1 \; : \; \bldu_1 + \bldu_2 \in V , \; \bldu_1 \in U', \; \bldu_2 \in U'' \} \; . 
\]
Similarly, we denote the projection of the vector $\bldu$ onto $U'$ by $(\bldu)|_{U'}$. 

For two vectors, $\bldu$ and $\bldv$, we write $\bldu \cdot \bldv$ to denote their scalar product. 
Let $W = U' \oplus U'' \subseteq \ff_q^n$ and assume that 
\begin{enumerate}
\item
$ U'' = \left< \blde_{i_1}, \blde_{i_2}, \cdots, \blde_{i_k} \right>$ for some $k$; 
\item
For all $\bldu \in U'$, $\bldv \in U''$, it holds $\bldu \cdot \bldv = 0$ (i.e. $U'$ and $U''$ are orthogonal).
\end{enumerate}
In that case, $U'$ is uniquely defined by $W$ and $U''$.  
For every vector $\bldu = (u_1, u_2, \cdots, u_n) \in U'$, we have $u_i = 0$ if $i \in \{ i_1, i_2, \cdots, i_k \}$. 
Then, we can define $\hat{U}$ as the subspace of $\ff_q^{n-k}$ obtained from $U'$ by removing all zero entries 
from the vectors in coordinates $\{ i_1, i_2, \cdots, i_k \}$.  
In that case we will write $W = \hat{U} \bigodot U''$. Observe that 
there is a natural bijection from the set of vectors in $U'$ onto the set of vectors in $\hat{U}$,
which is defined by removing all zeros in coordinates $\{ i_1, i_2, \cdots, i_k \}$.
In the sequel, sometimes we associate the vector in $\hat{U}$ with its 
pre-image in $U'$ under this bijection (or, simply speaking, sometimes we ignore the zero 
coordinates $\{ i_1, i_2, \cdots, i_k \}$ as above). Thus, by slightly abusing the notation we may also
write $W = U' \bigodot U''$.

Assume that $\dim(W) = n$. 
We use the notation $\cP(W,\ell)$ for the set of all subspaces of $W$ of dimension $\ell$, 
and $\cP(W)$ for the set of all subspaces of $W$ of any dimension. 
The number of $\ell$-dimensional subspaces of $W$, $0 \le \ell \le n$, 
is given by the $q$-ary Gaussian coefficient (see~\cite[Chapter 24]{Van-Lint}):
\[
|\cP(W,\ell)| = \left[ \begin{array}{c} n \\ \ell \end{array} \right]_q = \prod_{i=0}^{\ell-1} \frac{q^{n-i} - 1}{q^{\ell-i} - 1} \; . 
\]
For $U, V \in \cP(W)$, let
\[
\Distance(U, V) = \dim(U) + \dim(V) - 2 \dim(U \cap V)
\]
be a distance measure 
between $U$ and $V$ in the Grassmanian metric 
(see~\cite{KK}). We use the notation $\distance(\bldu, \bldv)$ for the Hamming distance between two vectors $\bldu$ and $\bldv$ of the same length. 

We say that the code $\Code$ is an $[n, \ell, \log_q(M), 2 D]_q$
subspace code, if it represents a set of subspaces in an ambient space $W$ over $\ff_q$, and satisfies the following conditions:
\begin{enumerate}
\item
$W$ is a vector space over $\ff_q$ with $\dim(W) = n$; 
\item
for all $V \in \Code$, $\dim(V) = \ell$;
\item
$|\Code| = M$;
\item
for all $U, V \in \Code$, $U \neq V$, it holds that
$\dim(U \cap V) \le \ell - D$, so that consequently $\Distance(U,V) \ge 2 D$.
\end{enumerate}
In~\cite{KK}, an $[\ell+m, \ell, mk, \ge 2 (\ell - k + 1)]_q$
subspace code was constructed by using an approach akin to Reed-Solomon codes. That code will henceforth be denoted by $\cK$. 
We refer the reader to~\cite{KK} for a detailed study of the code $\cK$. 

To formalize the network model, the authors of~\cite{KK} also introduced the  \emph{operator channel} and erasure operator as follows. 
Let $k \ge 0$ be an integer. Given a subspace $V \subseteq W$, if $\dim(V) \ge k$, the stochastic 
\emph{erasure operator} $\cH_k(V)$ returns some random $k$-dimensional subspace of $V$. Otherwise 
it returns $V$ itself. Then, for any subspace $U$ in $W$, it is always possible to write 
$U = \cH_k(V) \oplus E$, where $\cH_k(V)$ is a realization of $U \cap V$, $\dim(U \cap V) = k$,
and $E$ is a subspace of $W$. In particular, if $\dim(V) = k+1$, then $\cH_k(V)$ is called \emph{a dimension loss}. 
Similarly, if $\dim(V \oplus E) = \dim(V) + 1$ for some subspace $E$, then the corresponding operation
is called \emph{a dimension gain}. 

Decoding algorithms for the code $\cK$ were presented in~\cite{KK} and~\cite{SKK}. 
Suppose that $V \in \cK$ is transmitted over the operator channel. 
Suppose also that an $(\ell-\kappa+\gamma)$-dimensional subspace $U$ of $W$ is received, where
$k = \dim(U \cap V) = \ell - \kappa$. Here $\kappa = \ell - k$ denotes the number of dimension losses 
when modifying the subspace $V$ to $V \cap U$, while $\gamma$ similarly denotes the number of dimension gains needed to transform 
$V \cap U$ into $U$. Note that $\dim(E) = \gamma$, where $E$ is given in the decomposition of $U$. 
The decoders, presented in~\cite{KK} and~\cite{SKK}, are able to recover 
a single $V \in \cK$ whenever $\kappa+\gamma < D$. 
We denote hereafter a decoder for the code $\cK$ described in~\cite{KK} and~\cite{SKK} by $\cD_\cK$. Note that the decoding complexity of
$\cD_\cK$ is polynomial both in the dimension of the ambient vector space and the subspace distance $D$.

We find the following lemma useful in our subsequent derivations.

\begin{lemma}
Let $W = U' \oplus U''$ be a vector space over $\ff_q$, and let $V_1, V_2 \subseteq W$ be two vector subspaces. Then 
\[
\Distance(V_1, V_2) \ge \Distance(V_1|_{U'}, V_2|_{U'}) \; . 
\]
In other words, projections do not increase the subspace distance $\Distance$.
\label{lemma:projection-distance}
\end{lemma}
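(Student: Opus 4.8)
The plan is to realize the projection as an honest linear map and then track dimensions via rank--nullity together with two elementary inequalities for intersections. Concretely, let $\pi \colon W \to U'$ be the linear projection along $U''$, i.e.\ $\pi(\bldu_1 + \bldu_2) = \bldu_1$ for $\bldu_1 \in U'$, $\bldu_2 \in U''$; by the definition of the projection of a subspace, $V_i|_{U'} = \pi(V_i)$, and clearly $\ker \pi = U''$. First I would record the two facts this yields for $i = 1, 2$: applying rank--nullity to $\pi|_{V_i}$,
\[
\dim \pi(V_i) = \dim V_i - \dim(V_i \cap U'') ,
\]
and, since $\pi(V_1 \cap V_2) \subseteq \pi(V_1) \cap \pi(V_2)$, again by rank--nullity,
\[
\dim\big(\pi(V_1) \cap \pi(V_2)\big) \ge \dim \pi(V_1 \cap V_2) = \dim(V_1 \cap V_2) - \dim(V_1 \cap V_2 \cap U'') .
\]

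Next I would substitute the first identity into the definition of $\Distance$. Expanding $\Distance(V_1, V_2) - \Distance(\pi(V_1), \pi(V_2))$ and cancelling the $\dim V_i$ terms, the desired inequality $\Distance(V_1,V_2) \ge \Distance(V_1|_{U'}, V_2|_{U'})$ becomes equivalent to
\[
2\dim\big(\pi(V_1)\cap\pi(V_2)\big) + \dim(V_1 \cap U'') + \dim(V_2 \cap U'') \ge 2 \dim(V_1 \cap V_2) .
\]
Plugging in the lower bound on $\dim\big(\pi(V_1)\cap\pi(V_2)\big)$ from the previous step, it then suffices to prove
\[
\dim(V_1 \cap U'') + \dim(V_2 \cap U'') \ge 2\dim(V_1 \cap V_2 \cap U'') ,
\]
which is immediate since $V_1 \cap V_2 \cap U''$ is a subspace of both $V_1 \cap U''$ and $V_2 \cap U''$.

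The only point that requires any attention is the containment $\pi(V_1 \cap V_2) \subseteq \pi(V_1) \cap \pi(V_2)$, which is in general strict; happily it points in exactly the direction needed here, so there is no real obstacle. Beyond that, the main thing to be careful about is the bookkeeping when the two $\Distance$ expressions are expanded, since that is where a sign slip could occur; I would verify that cancellation step explicitly.
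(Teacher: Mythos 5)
Your argument is correct. It proves the same inequality as the paper but organizes the bookkeeping differently: the paper's proof never invokes rank--nullity or the kernel $U''$ at all; instead it picks a basis of $V_1 \cap V_2$, extends it by $t = \dim(V_1) - \dim(V_1\cap V_2)$ vectors to a basis of $V_1$, and observes that the projections of these $s+t$ vectors span $V_1|_{U'}$ while the first $s$ of them land inside $V_1|_{U'} \cap V_2|_{U'}$, giving $\dim(V_1|_{U'}) - \dim(V_1|_{U'}\cap V_2|_{U'}) \le t$ directly (and symmetrically for $V_2$), after which the two bounds are summed. You instead compute the exact dimension drops, $\dim \pi(V_i) = \dim V_i - \dim(V_i \cap U'')$ and $\dim \pi(V_1\cap V_2) = \dim(V_1\cap V_2) - \dim(V_1\cap V_2\cap U'')$, and reduce everything to the trivial containment $V_1\cap V_2\cap U'' \subseteq V_i \cap U''$. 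Both proofs pivot on the same key (and only nontrivial) fact, namely $\pi(V_1\cap V_2) \subseteq \pi(V_1)\cap\pi(V_2)$ with the possible strictness pointing in the favorable direction. What your version buys is slightly more information: it isolates exactly where slack can enter (the two containments $V_1\cap V_2\cap U''\subseteq V_i\cap U''$ and $\pi(V_1\cap V_2)\subseteq \pi(V_1)\cap\pi(V_2)$), which would let one characterize when equality holds; the paper's version is a one-sided estimate per subspace and is marginally shorter to write down. I verified your cancellation step and the final reduction; both are correct as stated.
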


\begin{proof}
By definition, we have  
\begin{multline*}
\Distance(V_1, V_2) = (\dim(V_1) - \dim(V_1 \cap V_2)) \\ + (\dim(V_2) - \dim(V_1 \cap V_2)) \; .
\end{multline*}
Let $s = \dim(V_1 \cap V_2)$ and $t = \dim(V_1) - \dim(V_1 \cap V_2)$. Take $\{ \bldv_1, \bldv_2, \cdots, \bldv_s \}$ 
to be a basis of $V_1 \cap V_2$ and $\{ \bldu_1, \bldu_2, \cdots, \bldu_t \}$ to be $t$ linearly independent vectors 
in $V_1 \backslash V_2$. Then, $\{ \bldv_1, \bldv_2, \cdots, \bldv_s \}$ and $\{ \bldu_1, \bldu_2, \cdots, \bldu_t \}$ jointly 
constitute a basis of $V_1$. 

Next, consider 
\[
\cB_1 \define (V_1 \cap V_2)|_{U'} = \langle (\bldv_1)|_{U'}, (\bldv_2)|_{U'}, \cdots, (\bldv_s)|_{U'} \rangle \; .
\] 
Clearly, $(V_1 \cap V_2)|_{U'}$ is a subspace of $V_1|_{U'}$ and of $V_2|_{U'}$. Therefore, 
\[
(V_1 \cap V_2)|_{U'} \subseteq V_1|_{U'} \cap V_2|_{U'} .
\]
Note that the inclusion in the above relation may be strict. 
 
On the other hand, $\cB_1$ together with $\{ (\bldu_1)|_{U'}, (\bldu_2)|_{U'}, \cdots, (\bldu_t)|_{U'} \}$ 
spans $V_1|_{U'}$. We thus have that 
\begin{eqnarray}
&& \hspace{-15ex} \dim(V_1|_{U'}) - \dim(V_1|_{U'} \cap V_2|_{U'}) \nonumber
\\ & \le & \dim(V_1|_{U'}) - \dim(\cB_1) \nonumber \\ 
& \le & t \nonumber \\ 
& = & \dim(V_1) - \dim(V_1 \cap V_2) \; . 
\label{eq:dim-1}
\end{eqnarray}

Similarly to~(\ref{eq:dim-1}), it can be shown that 
\begin{equation}
\dim(V_2|_{U'}) - \dim(V_1|_{U'} \cap V_2|_{U'}) \le \dim(V_2) - \dim(V_1 \cap V_2) \; . 
\label{eq:dim-2}
\end{equation}

From~(\ref{eq:dim-1}) and~(\ref{eq:dim-2}), we obtain that 
\begin{eqnarray*}
&& \hspace{-5ex} \Distance(V_1|_{U'}, V_2|_{U'}) \\
& = & (\dim(V_1|_{U'}) - \dim(V_1|_{U'} \cap V_2|_{U'})) \\
& & \hspace{7ex} + \; (\dim(V_2|_{U'}) - \dim(V_1|_{U'} \cap V_2|_{U'})) \\
& \le & (\dim(V_1) - \dim(V_1 \cap V_2)) + (\dim(V_2) - \dim(V_1 \cap V_2)) \\
& = & \Distance(V_1, V_2) \; . 
\end{eqnarray*}
This completes the proof of the claimed result.
\end{proof}

%-----------------------------------------------------------------------------------------------

\section{Hybrid Coding for Symbol Erasures and Dimension Gains/Losses}
\label{sec:code-L}

\subsection{Motivation}

Noncoherent network coding makes the topology of the network transparent to the code designer, 
and it has a strong theoretical foundations. Nevertheless, there are some practical issues that remain to be taken into account when
applying this coding scheme. First, the notion of ``dimension loss'' is fairly abstract since in networks only symbols (packets) can be 
erased or subjected to errors. It is reasonable to assume that a dimension loss corresponds to a number of 
symbol erasures/errors within the same message, although it is not clear how large this number is supposed to be. In the worst case scenario,
even one symbol erasure may lead to the change of one dimension. Second, the achievable throughput of the scheme and the underlying 
decoding complexity may be significantly inferior to those achievable only through classical network coding. Of course, this claim only holds
if the error-correcting scheme can be integrated with a linear network coding method. 

Let $W_\cL$ denote the space $\ff_q^n$ for some $n \in \nn$ 
and let $\cL$ be a set of subspaces of $W_\cL$ of dimension $\ell$. 
Assume that $V \in \cL$ is transmitted over a noncoherent network.
Assume that while propagating through the network, the vectors of $V$ were subjected to $\rho$ symbol errors and $\mu$ symbol erasures. 

Denote by $U$ the subspace spanned by the vectors obtained at the destination. 
Then, the vectors observed by the receiver are linear combinations of the vectors 
in $V$. Each of these vectors has, in the worst case scenario, at most $\rho$ symbol errors and $\mu$ symbol erasures.
Indeed, this can be justified as follows. 
If some vector $\bldx$ was transmitted in the network, and an erasure (or error) occurred in its $j$-th 
entry, in the worst case scenario this erasure (error) can effect only the $j$-th coordinates in \emph{all} vectors in 
$U$, causing this coordinate to be erased (or altered, respectively) in all of them. 
This is true for any network topology. 
Such erasure (or error) does not effect any other entries 
in the vectors observed by the receiver. 

We illustrate this concept by a simple example in Figure~\ref{fig:vectors}. 
In that example, one $\ff_q$-entry is erased in one vector.  
In the network-coding approach proposed in~\cite{KK}, in the worst-case scenario, 
this erasure is treated by disregarding the whole vector. In the proposed approach, 
in the worst-case scenario, the corresponding entry in all vectors will be erased. 

\begin{figure}[ht]
\begin{center}
  \subfloat[]{\includegraphics[width=0.4\textwidth]{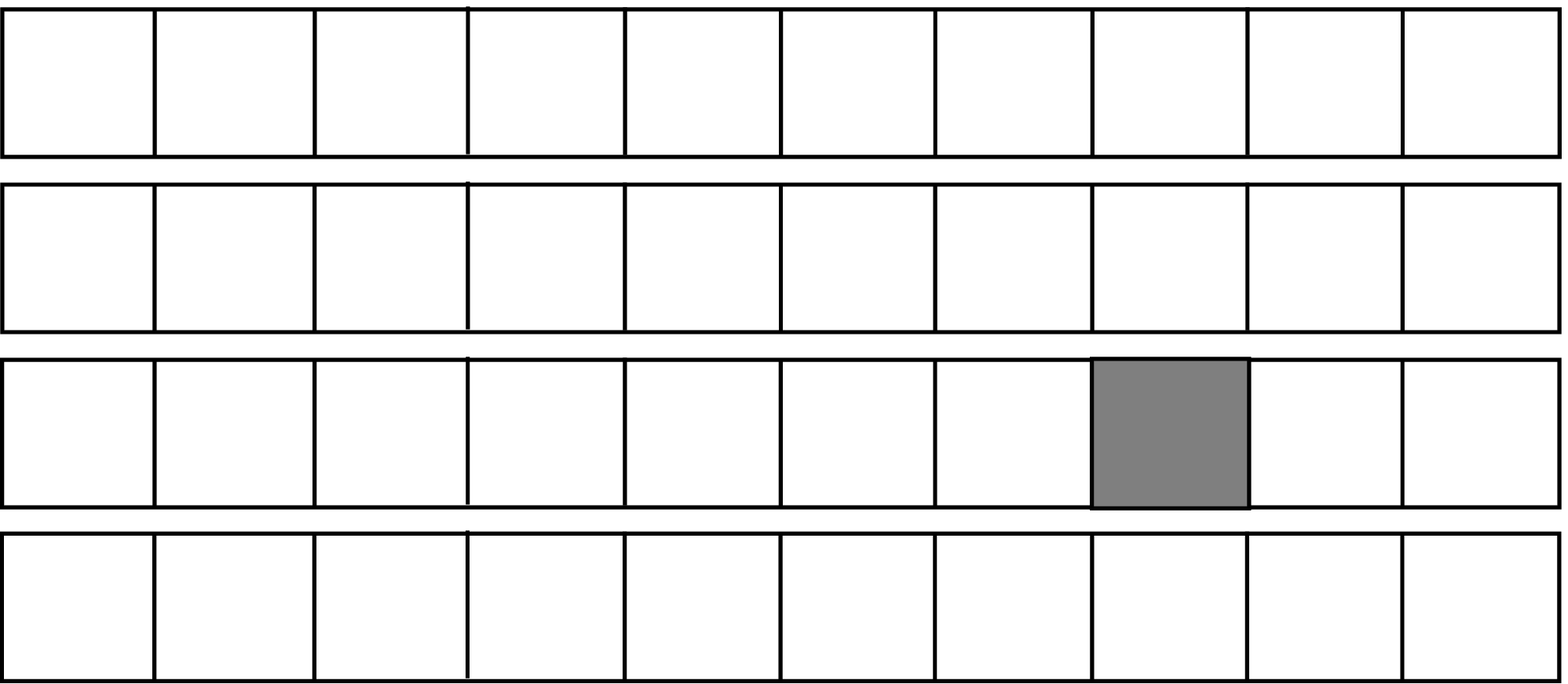}} 
  \vspace{3ex}
  \subfloat[]{\includegraphics[width=0.4\textwidth]{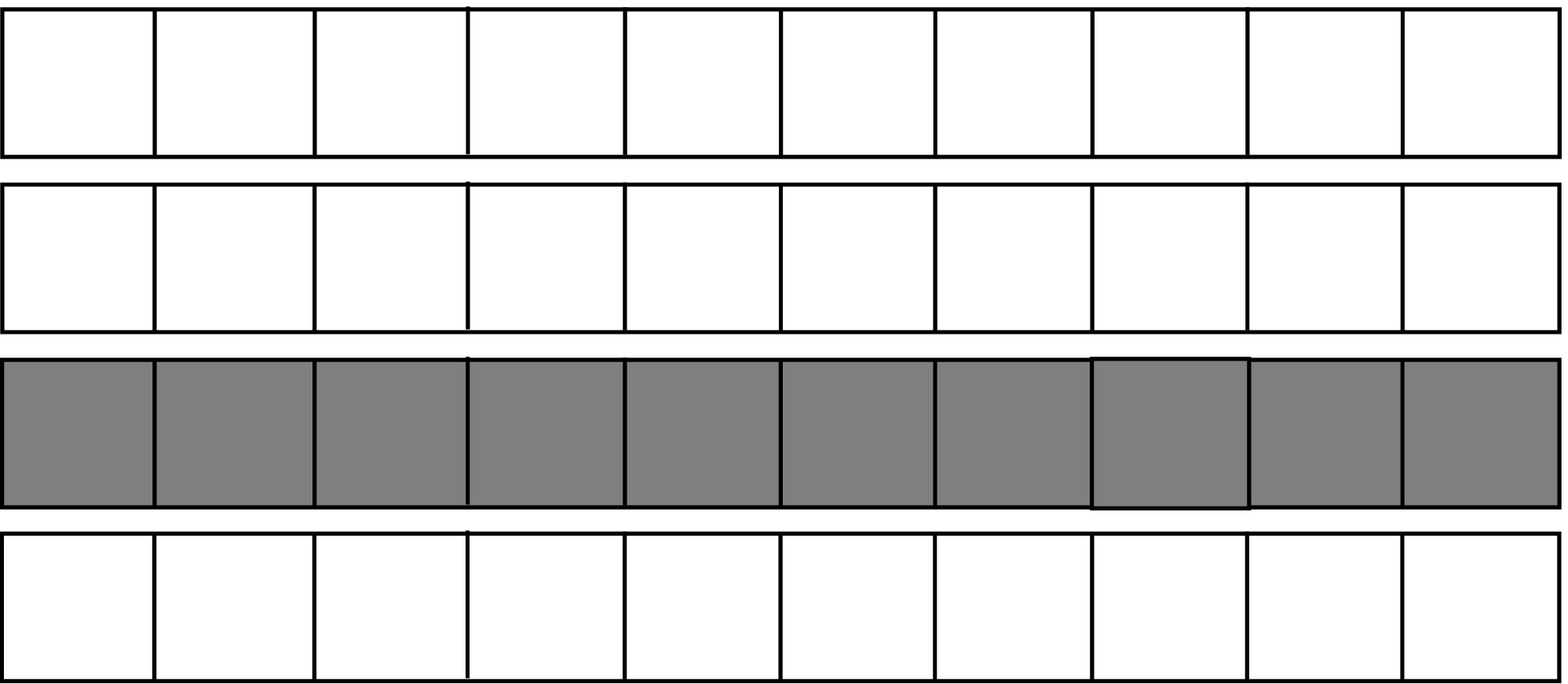}}
  \vspace{3ex}
  \subfloat[]{\includegraphics[width=0.4\textwidth]{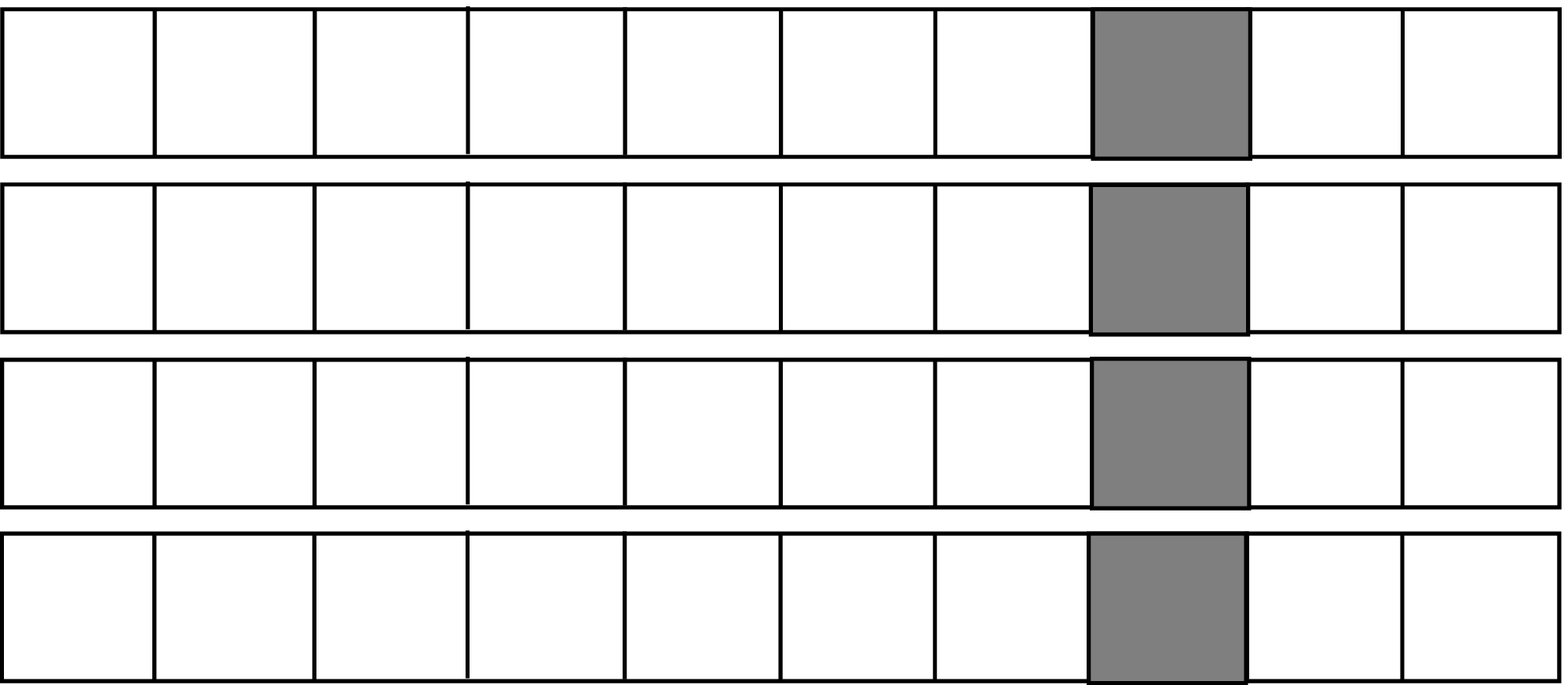}}
  \end{center}  
\caption{(a) Basis vectors of the transmitted subspace $V$: one vector has 
one entry erased. (b) In the approach of~\cite{KK}, this causes 
a dimension loss: the corresponding vector is discarded. (c) In the proposed 
alternative approach, the corresponding entry is erased in all vectors.}  
\label{fig:vectors}  
\end{figure}

This observation motivates the following definitions.

\begin{definition}
Consider a vector space $V \subseteq W_\cL$. Write $W_\cL = W_S \bigodot \langle \blde_j \rangle$ for some $1 \le j \le n$
and for some subspace $W_S$. 
A {\bf symbol error} in coordinate $j$ of $V$ is a mapping $\Phi_j$ from $V$ to $\Phi_j(V) = V' \subseteq W_\cL$, such that 
\[
V \neq V' \quad \mbox{ and } \quad V |_{W_S} = V' |_{W_S} \; . 
\]
\label{def:error}
\end{definition}
Observe that in general, one may have $\dim(V) \neq \dim(V')$ in Definition~\ref{def:error}. 

\medskip 

\begin{definition}
Let $V \subseteq W_\cL$ and assume that $W_\cL = W_S \bigodot \langle \blde_j \rangle$ for some $1 \le j \le n$
and for some subspace $W_S$. 
A {\bf symbol erasure} in coordinate $j$ of $V$ is a mapping $\Psi_j$ from $V$ to $\Psi_j(V) = V' \subseteq W_S$ 
such that 
\[
V|_{W_S} = V' \; . 
\]
\label{def:erasure}
\end{definition}

The subspace $V' \subseteq W_S$ can be naturally associated with $V_1 \subseteq (\ff_q \cup \{ ? \})^n$
in the following simple way: 
\begin{multline*}
    (v_1, \cdots, v_{j-1}, v_{j+1}, \cdots, v_n) \in V' \mbox{ if and only if } \\
    (v_1, \cdots, v_{j-1}, ?, v_{j+1}, \cdots, v_n) \in V_1 \;  . 
\end{multline*}
By slightly abusing the terminology, sometimes we say that $V_1$ (rather than $V'$) is obtained 
from $V$ by an erasure in coordinate $j$. Strictly speaking, such $V_1$ is not a vector space, 
since there are no mathematical operations defined for the symbol $?$. However, it 
will be natural to define for all $\bldu, \bldv, \bldw \in V_1$ and $\alpha \in \ff_q$ 
that $\bldu + \bldv = \bldw$ if and only if $(\bldu)|_{W_S} + (\bldv)|_{W_S} = (\bldw)|_{W_S}$, 
and that $\alpha \bldu = \bldw$ if and only if $\alpha \cdot (\bldu)|_{W_S} = (\bldw)|_{W_S}$. 

\medskip

\begin{comment}
Observe, that symbol errors and erasures can be combined. Thus, we say that the vector 
space $U'$ is obtained from $U$ by $\rho$ symbol errors and $\mu$ symbol erasures, if there exist
a series of $\rho$ error mappings and $\mu$ erasure mappings as above from $U$ to $U'$. 
Note that under these definitions, the order of errors is irrelevant. 
Thus, if $U'$ is obtained from $U$ by errors in the coordinates $\{ j_1, j_2, \cdots, j_\rho \}$ of $U$
and erasures in the coordinates $\{ j'_1, j'_2, \cdots, j'_\tau \}$ of $U$, then any order of application 
of error and erasure mappings yields the same subspace $U'$. 
\end{comment} 

To this end, we remark that there are four potential types of data errors in
a network that are not necessarily incurred independently: 
\begin{enumerate}
\item Symbol erasures;
\item Symbol errors; 
\item Dimension losses;
\item Dimension gains. 
\end{enumerate}
Below, we generalize the operator channel as follows.  
\begin{definition} 
Let $V$ be a subspace of $W = \ff_q^n$. The stochastic {\bf operator channel with symbol errors and erasures} 
returns a random subspace $U$ such that 
\begin{equation}
   U = \Upsilon_t ( \Upsilon_{t-1} ( \cdots ( \Upsilon_1 (V))) \subseteq \ff_q^{n - \mu} \; ,  
\label{eq:channel}
\end{equation}
for some $t$, where all $\Upsilon_i$, $i = 1, 2, \cdots, t$, are one of the following:
\begin{enumerate}
\item A symbol error $\Phi_j$, for some $j$; 
\item A symbol erasure $\Psi_j$, for some $j$, and there are exactly $\mu$ $\Upsilon_i$'s that are symbol erasures;
\item A dimension loss;
\item A dimension gain. 
\end{enumerate}
Furthermore, without loss of generality, we assume that all symbol errors and erasures are preceded by all dimension gains and losses\footnote{Even 
if the errors appear in a different order, the resulting subspace may still be generated by first applying dimension errors and then subsequently introducing symbol errors/erasures.}.
We reiterate this statement throughout the paper, since it is of importance in our subsequent derivations.
\label{def:channel}
\end{definition}

We now on focus on two important cases of Definition~\ref{def:channel}. 
\begin{definition} 
The operator channel in Definition~\ref{def:channel}
is called {\bf an operator channel with symbol erasures} if the number of symbol errors 
is always zero. 
\end{definition}
For an operator channel with symbol erasures, for each output $U$, we may and will always assume that 
dimension losses and gains have occurred first, followed by symbol erasures. 
More specifically, 
let $S = \{ j_1, j_2, \cdots, j_{\mu} \} \subseteq [n]$ be the set of erased coordinates in $V$, and let $W = W_S \bigodot \langle \blde_{j_1}, \blde_{j_2}, \cdots, \blde_{j_{\mu}} \rangle$. 
Then, 
\[
U_1 = \cH_k(V) \oplus E  \qquad \mbox{ and } \qquad U = U_1 |_{W_S} \; ,
\]
where $\dim(V \cap U_1) = k$, $\dim(U_1) = \ell'$. 
Here, one first creates $U_1$ from $V$ via dimension errors only (losses and gains). 
Subsequently, $U$ is created from $U_1$ by erasing coordinates in $S$. 
Let $\dim(V) = \ell$ and $\dim(U) = \ell'$. 
We say that $\ell - k$ is the number of dimension losses, $\ell' - k$ is the number of dimension gains,
and $\mu$ is the number of symbol erasures.   

\begin{definition} 
The operator channel in Definition~\ref{def:channel} is called {\bf an erasure-operator channel with symbol errors and erasures} if the number of dimension gains  
is always zero. 
\label{def:operator-channel-type-2}
\end{definition}
For the operator channel in Definition~\ref{def:operator-channel-type-2} we also assume that 
dimension losses have occurred first, followed by symbol errors and erasures. We define 
the numbers of dimension losses, symbol errors and symbol erasures analogous to the case 
of the operator channel with symbol erasures. 
\medskip 

In the forthcoming sections, we first concentrate on designing codes that are able to handle 
simultaneously symbol erasures, dimension losses and dimension gains. We postpone the discussion about how to handle 
symbol errors to Sections~\ref{sec:decoding-2} and~\ref{sec:decoding-failure}. 

%------------------------------------------------------------------------------------------------
\subsection{Code Definition}
%\label{sec:code-L}

We start the development of our approach with the following definition. 

\begin{definition}
A subspace code $\cL \subseteq \cP(W_\cL, \ell)$ (a set of subspaces in $W_\cL$ of dimension $\ell$) is called a {\bf code capable of correcting $D-1$ dimension errors} (either losses or gains) {\bf and $d-1$ symbols erasures},
or more succinctly, a $(D,d)$-{\bf hybrid code}, if it satisfies the following properties:   
\begin{enumerate}
\item
For any $V \in \cL$, $\dim(V) = \ell$.
\item
For any $U, V \in \cL$, $\dim(U) + \dim(V) - 2 \dim(U \cap V) \ge 2 D$. 
\item
Let $V \in \cL$. Let $V'$ be the subspace obtained from $V$ via $\mu$ symbol erasures, where 
$1 \le \mu \le d-1$. Then, $\dim(V') = \ell$ and 
the space $V$ is the only pre-image of $V'$ in $\cL$ under the given $\mu$ symbol erasures. 
\item
Let $U, V \in \cL$. Let $U', V'$ be obtained from $U$ and $V$, respectively, via $\mu$ symbol erasures, where 
$1 \le \mu \le d-1$. Here, both $U$ and $V$ have erasures in the same set of coordinates. 
Then, $\dim(U') + \dim(V') -2 \dim(U' \cap V') \ge 2 D$.
\end{enumerate}
\label{def:codes}
\end{definition}

%Observe also that the conditions $\dim(V') = \ell$ and uniqueness of the pre-image in (3) are not necessary. However, if the first condition does not hold, 
%this means that $V$ contains a vector of a small Hamming weight, and from (4) for any $U \in \cL$, $U \neq V$, the dimension 
%of $U \cap V$ is not the largest possible. Therefore, it would have been possible to remove this weight-one vector from $V$ in the first 
%place. This suggests that possibly the corresponding $\cL$ is not optimally chosen. In either case, we keep this condition 
%for the sake of mathematical clarity.
%\begin{remark} Part (3) of Definition 3.3. appears rather restrictive, since it requires preserving the dimension of the projected subspace
%and uniqueness of the pre-image in the presence of erasures. As will become apparent from Section , these conditions... 
%\end{remark}  
We explain next why the class of  hybrid $(D,d)$ codes, satisfying properties 1) - 4), is capable of correcting $D-1$ dimension errors and $d-1$ symbol erasures. 
\begin{theorem} 
Let $\cL \subseteq \cP(W_\cL,\ell)$ be a code satisfying properties  1) - 4). 
Then, $\cL$ is capable of correcting any error pattern of $D-1$ dimension errors and $d-1$ symbol erasures. 
\end{theorem}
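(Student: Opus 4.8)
The plan is to show that a decoder can uniquely recover the transmitted subspace $V \in \cL$ from the received subspace $U$, given that at most $D-1$ dimension errors (losses or gains combined) and at most $d-1$ symbol erasures have occurred. Following Definition~\ref{def:channel} and the subsequent remarks, I may assume that the dimension errors are applied first, producing an intermediate subspace $U_1$ with $U_1 = \cH_k(V) \oplus E$, $\dim(V \cap U_1) = k$, and then the $\mu \le d-1$ symbol erasures in a coordinate set $S$ of size $\mu$ are applied, yielding $U = U_1|_{W_S}$, where $W_\cL = W_S \bigodot \langle \blde_{j} : j \in S\rangle$.

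First I would argue that the decoder knows $S$: the erased coordinates are marked by the symbol $?$, so $S$ is available at the receiver. Given $S$, the decoder can restrict attention to the ``punctured'' ambient space $W_S$. The key step is to reduce the problem to a standard subspace-decoding problem in $W_S$. For each candidate codeword $\tilde V \in \cL$, apply the same $\mu$ erasures in coordinates $S$ to obtain $\tilde V|_{W_S}$; by property 3), $\dim(\tilde V|_{W_S}) = \ell$ and the map $\tilde V \mapsto \tilde V|_{W_S}$ is injective on $\cL$, so the punctured code $\cL|_S \define \{\tilde V|_{W_S} : \tilde V \in \cL\}$ is again a set of $\ell$-dimensional subspaces in bijection with $\cL$. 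By property 4), any two distinct members of $\cL|_S$ are at Grassmann distance at least $2D$ from one another. Hence $\cL|_S$ is a subspace code in $W_S$ with minimum distance $\ge 2D$, so it corrects any combination of dimension losses and gains totalling at most $D-1$ (this is exactly the guarantee of $\cD_\cK$-style decoders, or simply the minimum-distance decoding argument for the Grassmann metric recalled before the statement).

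It then remains to check that the relationship between $V$ and $U$, after puncturing, is precisely that of at most $D-1$ dimension errors between $V|_{W_S}$ and $U$. We have $U = U_1|_{W_S}$ with $\Distance(V, U_1) \le 2(D-1)$ by construction (since $\kappa + \gamma \le D-1$ dimension errors separate $V$ from $U_1$, and $\Distance(V,U_1) = \kappa + \gamma + \gamma' $—more carefully $\Distance(V,U_1) = (\ell - k) + (\dim(U_1) - k) \le 2(D-1)$ when both losses and gains are bounded appropriately; the precise bookkeeping is that a combination of $a$ losses and $b$ gains with $a + b \le D-1$ gives $\Distance \le 2(D-1)$). Applying Lemma~\ref{lemma:projection-distance} with the decomposition $W_\cL = W_S \oplus \langle \blde_j : j \in S\rangle$, projection onto $W_S$ does not increase the subspace distance, so $\Distance(V|_{W_S}, U_1|_{W_S}) = \Distance(V|_{W_S}, U) \le 2(D-1) < 2D$. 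Therefore $U$ lies strictly inside the minimum-distance-$D$ decoding ball around $V|_{W_S}$ in the code $\cL|_S$, and no other codeword of $\cL|_S$ is that close; unique decoding to $V|_{W_S}$ follows, and then the bijection recovers $V$.

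The main obstacle I anticipate is the careful accounting of how dimension losses and gains translate into a distance bound that survives the projection — in particular making sure that the bound $\kappa + \gamma \le D-1$ yields $\Distance(V, U_1) \le 2(D-1)$ and that Lemma~\ref{lemma:projection-distance} applies with the right direct-sum decomposition (one must verify $V|_{W_S}$ really is the projection of $V$, and that erasures as defined in Definition~\ref{def:erasure} coincide with this projection, which the text essentially grants). A secondary subtlety is confirming that property 4) as stated — which only guarantees distance $\ge 2D$ when the erasure coordinate sets coincide — is exactly what is needed here, since $V$ and any competing codeword are punctured in the same known set $S$. Once these points are pinned down, the argument is a direct concatenation of Lemma~\ref{lemma:projection-distance} with standard minimum-distance decoding.
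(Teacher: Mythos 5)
Your overall strategy is exactly the paper's: assume the dimension errors precede the erasures, form the punctured code $\cL' = \{\tilde V|_{W_S} : \tilde V \in \cL\}$, use properties 3) and 4) to see that $\cL'$ is an $[n-d+1, \ell, \log_q|\cL|, 2D]_q$ subspace code, transfer the distance bound through Lemma~\ref{lemma:projection-distance}, decode in $\cL'$, and invert the puncturing via the uniqueness of the pre-image guaranteed by property 3). That skeleton is sound and matches the paper step for step.

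The one genuine problem is in the distance bookkeeping, precisely the point you flagged as the anticipated obstacle. With $a$ dimension losses and $b$ dimension gains, $a+b \le D-1$, one has $\Distance(V,U_1) = (\ell-k) + (\ell'-k) = a+b \le D-1$ \emph{exactly}: each individual loss or gain changes the Grassmann distance by one, not by two. You write the correct formula $(\ell-k)+(\dim(U_1)-k)$ but then bound it by $2(D-1)$, and this weaker bound does not support your conclusion. For a subspace code of minimum distance $2D$ the unique-decoding radius is $D-1$, so $\Distance(V|_{W_S}, U) \le 2D-2$ does not place $U$ uniquely in the ball around $V|_{W_S}$; a competing codeword $V''$ with $\Distance(V'', U) \le 2D-2$ is entirely consistent with $\Distance(V|_{W_S}, V'') \ge 2D$, since the triangle inequality only forces $\Distance(V|_{W_S}, V'') \le 4D-4$. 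Hence the assertion ``no other codeword of $\cL'$ is that close'' does not follow as written. Replacing $2(D-1)$ by the correct bound $D-1$ --- which is the paper's inequality~(\ref{eq:dist-D-1}) --- repairs the argument completely; everything else in your proposal, including the observation that property 4) applies because all codewords are punctured in the same known set $S$, is correct.
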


\begin{proof}
Suppose that $V \in \cL$ is transmitted through the operator channel, and that the subspace 
$U \in \cP(\ff_q^{n-d+1}, \ell')$ is received, where $D-1$ dimension errors and $d-1$ symbols erasures occurred. 
Note that here $\ell'$ is not necessarily equal to $\ell$. 

Recall the assumption that the dimension errors occurred first, and are followed by symbol erasures. 
As pointed out before, the order in which dimensional errors occur is irrelevant. 

More formally, let $S = \{ j_1, j_2, \cdots, j_{d-1} \} \subseteq [n]$ be  
a set of erased coordinates in $U$, and let $W_\cL = W_S \bigodot \langle \blde_{j_1}, \blde_{j_2}, \cdots, \blde_{j_{d-1}} \rangle$, for some subspace $W_S$. 
Then, 
\[
U_1 = \cH_k(V) \oplus E  \qquad \mbox{ and } \qquad U = U_1 |_{W_S} \; ,
\]
where $\dim(V \cap U_1) = k$, $\dim(U_1) = \ell'$, and 
\begin{equation}
\ell + \ell' - 2k \le D - 1 \; . 
\label{eq:dist-D-1}
\end{equation}

We show that if $\cL$ satisfies properties 1) - 4), then it is possible to recover $V$ from $U$. Indeed, consider the 
following set of subspaces 
\[
\cL' = \{ V|_{W_S} \subseteq \ff_q^{n-d+1} \; : \; V \in \cL \} \; .  
\] 
Take any $V_1, V_2 \in \cL'$. By property 3), $\dim(V_1) = \dim(V_2) = \ell$, and by property 4), 
$\dim(V_1) + \dim(V_2) - 2 \dim(V_1 \cap V_2) \ge 2D$. Therefore, $\cL'$ is a $[n-d+1, \ell, \log_q|\cL|, 2D]_q$ subspace code. 
It is capable of correcting of up to $D-1$ dimension errors in $\ff_q^{n-d+1}$. 

Denote $V' = V|_{W_S} \in \cL'$. Then, from Lemma~\ref{lemma:projection-distance} and the bound in~(\ref{eq:dist-D-1}),
\begin{eqnarray*}
\Distance(V',U) \le \Distance(V,U_1) = D - 1 \; .  
\end{eqnarray*}
We conclude that there exists a (not necessarily efficient) bounded-distance subspace decoder for the code $\cL'$ that is capable of recovering 
$V'$ from $U$. 

Finally, observe that $V'$ is obtained from $V$ by erasing $d-1$ coordinates indexed by $S$. 
From property~3), the pre-image of $V'$ under these erasures is unique. Therefore, $V$ can be recovered 
from $V'$. 
\end{proof}

\begin{remark}
The intuition behind the definition of hybrid codes is that dimension losses may and actually occur as a consequence of symbol erasures or errors.
Symbol erasures are ``easier'' to correct than dimension losses, and upon correcting a number of symbol erasures one expects to reduce the number of dimension losses. 
These claims are more rigorously formulated in Section 5.3.
\end{remark}

Throughout the remainder of the paper, we use the notation $[n, \ell, \log_q(M), 2D, d]_q$\footnote{Whenever it is apparent from the context, we omit the subscript $q$.} to denote a 
hybrid code $\cL \subseteq \cP(W, \ell)$ with the following properties: 
\begin{enumerate}
\item
$\dim(W) = n$; 
\item
for all $V \in \code$, $\dim(V) = \ell$;
\item
$|\cL| = M$;
\item
$\cL$ is a code capable of correcting $D-1$ dimension errors and $d-1$ symbols erasures. 
\end{enumerate}

\section{Bounds on the Parameters of Hybrid Codes}
\label{sec:bounds}

In this section, we derive the Singleton and the sphere-packing bound for hybrid codes handling dimension losses/gains and symbol erasures simultaneously.

\subsection{The Singleton Bound}

Assume that a vector space $W$ over $\ff_q$ has dimension $n$, and let $\cL \subseteq \cP(W, \ell)$
be a subspace code. In what follows, we use a \emph{puncturing} of the code $\cL$.
%,which can be thought of as symbol erasure in all $V \in \cL$.

\begin{definition} The puncturing of a code $\cL \subseteq \cP(W, \ell)$ at position $j$ is 
a set $\cL'$ of subspaces of $\ff_q^{n-1}$ given by 
\begin{equation}
\cL' \triangleq \{ \Psi_j(V) \; : \;  V \in \cL \} \; .  
\label{eq:puncturing}
\end{equation}
\label{def:puncturing}
\end{definition}

\begin{remark}
In general, Definition~\ref{def:puncturing} is different from the definition of puncturing in~\cite[Section IV.C]{KK}.  
In particular, the puncturing in Definition~\ref{def:puncturing} does not necessarily decrease the dimension of $V$. On the other hand, Definition~\ref{def:puncturing} is similar to the first part of the definition of \emph{$j$-coordinate puncturing} in~\cite[Section 5.A]{Etzion-Silberstein}. 
\end{remark} 

%------------------
\begin{comment}
Let $\blde \in W$. We write $W = W' \oplus \{ \blde \}$, where 
$W'$ is the vector subspace of $W$ of dimension $n-1$. 
The puncturing operator $\cE_{\blde} (\cc)$  
is defined by 
\[
\cE_{\blde} (\cc) \define \left\{ U|_{W'} \; : \; U \in \cc \right\} . 
\]
By slight abuse of notation, this operator can also be defined to apply to subspaces in $W$, namely, for any vector space $U \in W$, 
we can write
\[
\cE_{\blde} (U) \define U|_{W'} \; . 
\] 
One important type of puncturing is \emph{coordinate puncturing.} This type of puncturing corresponds to a regular puncturing of the $j$-th coordinate of all $U \in \cc$, here $U$ can be viewed as a classical linear code. 
In this case, $\blde = \blde_j$. 
\begin{description}
\item{\bf Dimension puncturing.} This type of puncturing is considered in~\cite[Section 4.3]{KK}.  
Here $\blde \in U$. 
\end{description}
\end{comment}
%-----------------

\begin{theorem}
Let $\cL$ be a code of type $[n, \ell, \log_q (M), 2D, d]$ in the ambient space $W_\cL$. 
If $d > 1$, then coordinate puncturing at coordinate $j$  
yields a code with parameters $[n-1, \ell, \log_q (M), 2D, \ge d -1 ]$.
\end{theorem}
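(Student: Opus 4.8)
The plan is to show that the punctured code $\cL' = \{\Psi_j(V) \,:\, V \in \cL\}$ satisfies all four properties of Definition~\ref{def:codes} with $D$ unchanged and with the erasure parameter reduced from $d$ to $d-1$; equivalently, that $\cL'$ is a $(D,d-1)$-hybrid code in the ambient space $W_S$ of dimension $n-1$, where $W_\cL = W_S \bigodot \langle \blde_j \rangle$. Since a $(D,d-1)$-hybrid code tolerates $D-1$ dimension errors and $d-2$ symbol erasures, this is exactly the assertion $[n-1,\ell,\log_q(M),2D,\ge d-1]$, the ``$\ge$'' leaving room for the possibility that $\cL'$ in fact corrects more erasures.

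First I would dispose of the parameters $n-1$, $\ell$ and $M$. That the ambient space is $\ff_q^{n-1}$ is immediate. Applying property~3) of $\cL$ with $\mu = 1$ (legitimate because $d-1 \ge 1$ by the hypothesis $d>1$) shows simultaneously that $\dim(\Psi_j(V)) = \ell$ for every $V \in \cL$ (this is property~1) for $\cL'$) and that $V$ is the unique pre-image in $\cL$ of $\Psi_j(V)$ under the erasure at coordinate $j$; the latter means the map $V \mapsto \Psi_j(V)$ is injective on $\cL$, so $|\cL'| = M$. For property~2), take distinct $U', V' \in \cL'$, write $U' = \Psi_j(U)$, $V' = \Psi_j(V)$ with $U,V \in \cL$; injectivity forces $U \ne V$, and property~4) of $\cL$ (again with $\mu=1$, both erasures at coordinate $j$) gives $\dim(U') + \dim(V') - 2\dim(U' \cap V') \ge 2D$.

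The core of the argument is the observation that erasures compose transparently: since $\Psi_j(V) = V|_{W_S}$ is simply a coordinate projection, erasing $\mu$ further coordinates of $\Psi_j(V)$ inside $\ff_q^{n-1}$ coincides, under the obvious identification of the coordinates $1,\dots,n-1$ of $\ff_q^{n-1}$ with the coordinates $[n]\setminus\{j\}$ of $\ff_q^n$, with erasing the corresponding $\mu+1$ coordinates of $V$ directly. Using this, properties~3) and~4) for $\cL'$ with an erasure parameter $\mu$ in the range $1 \le \mu \le d-2$ reduce to the same properties for $\cL$ with parameter $\mu+1$, which lies in the admissible range $2 \le \mu+1 \le d-1$. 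Concretely, for property~3): given $V' = \Psi_j(V) \in \cL'$ and its image $V''$ under $\mu$ erasures, $V''$ is also the image of $V$ under the corresponding $\mu+1$ erasures, so property~3) of $\cL$ yields $\dim(V'') = \ell$ and uniqueness of $V$; any competing pre-image $\tilde V' \in \cL'$ of $V''$ lifts, by injectivity, to a pre-image $\tilde V \in \cL$, which must equal $V$, whence $\tilde V' = V'$. Property~4) for $\cL'$ follows the identical template, lifting the two codewords and their common set of $\mu$ erased coordinates to a common set of $\mu+1$ erased coordinates of the two corresponding codewords of $\cL$ and invoking property~4) of $\cL$.

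The only real obstacle is bookkeeping: one must check carefully that the erasure operations on $\cL'$ and on $\cL$ genuinely compose as claimed under the coordinate relabeling, and that the requirement ``erasures in the same set of coordinates'' is faithfully transported across this correspondence. Once that identification is nailed down, each of the four properties for $\cL'$ is a one-line consequence of the matching property of $\cL$ together with the injectivity of puncturing established at the start.
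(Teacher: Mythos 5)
Your proof is correct and follows essentially the same route as the paper: both establish the parameters $n-1$, $\ell$, $M$ and the subspace distance $2D$ by invoking properties 3) and 4) of $\cL$ with $\mu=1$, and both obtain the residual erasure parameter $\ge d-1$ by composing the puncturing at coordinate $j$ with up to $d-2$ further erasures. Your write-up is merely more explicit than the paper's one-sentence treatment of that last composition step.
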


\begin{proof}
Let $\cL'$ be a code obtained by puncturing of the $j$-th coordinate in all vectors spaces in $\cL$, 
as in~(\ref{eq:puncturing}). 
Clearly, the dimension of the ambient space decreases by one under this puncturing, 
and so the resulting ambient space $W'$ satisfies $\dim(W') = n-1$.  

Let $V \in \cL$. Since $d>1$, by property (3) in Definition~\ref{def:codes}, $\dim(V') = \ell$ and 
$V'$ has a unique pre-image. Therefore, $|\cL| = |\cL'|$. 

The fact that puncturing does not change the subspace distance $2D$  follows from the property that $\cL$ is a code capable of correcting 
$D-1$ dimension errors and $d-1$ 
symbol errors. Thus, $\dim(U') + \dim(V') -2 \dim(U' \cap V') \ge 2 D$,
where $U'$ and $V'$ are obtained by puncturing of $U$ and $V$, respectively.  
Since each subspace in $\cL'$ is obtained from its pre-image in $\cL$ by an erasure in the $j$-th coordinate, the codes' Hamming distance resulting
from puncturing is at least $d-1$. 
\end{proof}

\begin{theorem}
The size $M$ of the $[n, \ell, \log_q (M), 2D, d]_q$ code $\cL$ satisfies
\[
M \le \cA_q(n - d + 1, \ell, 2D) \; , 
\]
where $\cA_q(n, \ell, 2D)$ stands for the size of the largest subspace code $[n, \ell, M', 2D]_q$.  
\end{theorem}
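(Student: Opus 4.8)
The plan is to iterate the puncturing theorem that was just established. The preceding theorem states that if $\cL$ is a code of type $[n,\ell,\log_q(M),2D,d]$ with $d>1$, then puncturing at any single coordinate yields a code of type $[n-1,\ell,\log_q(M),2D,\ge d-1]$, with the key point being that the size $M$ is preserved (because property (3) of Definition~\ref{def:codes} guarantees each punctured subspace has a unique pre-image, so $|\cL'|=|\cL|$) and the subspace distance $2D$ is preserved.

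First I would apply this puncturing $d-1$ times in succession. Starting from $\cL$ with parameters $[n,\ell,\log_q(M),2D,d]$, after one puncturing I get a code with parameters $[n-1,\ell,\log_q(M),2D,\ge d-1]$; as long as the Hamming-distance parameter stays strictly greater than $1$, the hypothesis $d>1$ of the puncturing theorem remains satisfied and I can puncture again. After $d-1$ steps I arrive at a set of subspaces living in an ambient space of dimension $n-(d-1)=n-d+1$, all of dimension $\ell$, of cardinality exactly $M$ (the size never drops), and with pairwise subspace distance at least $2D$. In other words, this is a subspace code with parameters $[n-d+1,\ell,\log_q(M),2D]_q$ in the ordinary (non-hybrid) sense of the earlier definition.

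Then, by the very definition of $\cA_q(n-d+1,\ell,2D)$ as the size of the \emph{largest} subspace code with these parameters, we immediately get $M\le \cA_q(n-d+1,\ell,2D)$, which is the claimed bound.

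The only point requiring a little care — and the place I would be most careful in the write-up — is the bookkeeping of the Hamming-distance parameter $d$ through the iteration: I need to check that after $i$ puncturings the remaining parameter is at least $d-i$, which stays $\ge 2 > 1$ for all $i\le d-2$, so the hypothesis of the puncturing theorem is legitimately available at every step up to and including the $(d-1)$-st. (One should also note that the $\ge$ in front of $d-1$ in the intermediate steps only helps, never hurts, since a larger erasure-correction parameter still satisfies $d>1$.) Everything else — that dimension $\ell$ is preserved, that $M$ is preserved, that the subspace distance is preserved — is handed to us verbatim by the previous theorem, so no genuinely new argument is needed; the proof is essentially a one-line induction followed by invoking the definition of $\cA_q$.
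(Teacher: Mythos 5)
Your proposal is correct and follows essentially the same route as the paper: apply the preceding coordinate-puncturing theorem $d-1$ times to obtain an $[n-d+1,\ell,\log_q(M),2D]_q$ subspace code of the same size, then invoke the definition of $\cA_q$. Your extra bookkeeping showing the Hamming-distance parameter stays at least $2>1$ through the first $d-2$ steps is a detail the paper glosses over, but it is exactly the right check and introduces nothing new.
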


\begin{proof}
We apply $d-1$ coordinate puncturings to $\cL$. 
The resulting code is a $[n-d+1, \ell, \log_q(M), 2D]$ subspace code. 
Indeed, it has the same number of codewords as $\cL$, and it is a set of 
$\ell$ dimensional subspaces in a $n-d+1$-dimensional ambient space, whose 
pairwise intersection is of dimension $\le \ell-D$. In particular, its size is upper bounded 
by $\cA_q(n - d + 1, \ell, 2D)$. 
\end{proof}

\begin{corollary}
From the Singleton bound in~\cite{KK}, the size $M$ of the $[n, \ell, \log_q (M), 2 D, d]_q$ code $\cL$ satisfies
\begin{equation}
M \le \min \left\{ \left[ \begin{array}{c} n - d - D + 2 \\ \ell - D + 1 \end{array} \right]_q, 
\left[ \begin{array}{c} n - d - D + 2 \\ \ell \end{array} \right]_q \right\}\; . 
\label{eq:singleton}
\end{equation}
\label{cor:singleton}
\end{corollary}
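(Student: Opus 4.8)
The plan is to simply feed the conclusion of the preceding theorem into the Singleton-type bound for ordinary subspace codes from~\cite{KK}. That theorem already tells us that $M \le \cA_q(n-d+1,\ell,2D)$: a $[n,\ell,\log_q(M),2D,d]_q$ hybrid code, after $d-1$ coordinate puncturings, becomes an ordinary $[n-d+1,\ell,\log_q(M),2D]_q$ subspace code in the ambient space $\ff_q^{n-d+1}$, with the same number of codewords. So the only thing left is to bound $\cA_q(n-d+1,\ell,2D)$ by a closed-form quantity.

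For that I would invoke the Singleton bound of~\cite{KK}: any $[n',\ell,\log_q(M'),2D]_q$ subspace code satisfies $M' \le \left[ \begin{array}{c} n'-D+1 \\ \ell-D+1 \end{array} \right]_q$ (this is the bound obtained in~\cite{KK} by applying $D-1$ rounds of dimension puncturing, each of which drops the ambient dimension and the codeword dimension by one and the subspace distance by two, leaving a code of distinct $(\ell-D+1)$-dimensional subspaces of an $(n'-D+1)$-dimensional space). A second, ``dual'' form of the same bound comes from the fact that $V \mapsto V^{\perp}$, orthogonal complement with respect to the standard inner product on $\ff_q^{n'}$, is a bijection on $\cP(\ff_q^{n'})$ preserving the distance $\Distance$; applying the first bound to the $[n',n'-\ell,\log_q(M'),2D]_q$ code of orthogonal complements and using $\left[ \begin{array}{c} a \\ b \end{array} \right]_q = \left[ \begin{array}{c} a \\ a-b \end{array} \right]_q$ gives $M' \le \left[ \begin{array}{c} n'-D+1 \\ (n'-\ell)-D+1 \end{array} \right]_q = \left[ \begin{array}{c} n'-D+1 \\ \ell \end{array} \right]_q$. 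Hence $\cA_q(n',\ell,2D)$ is at most the minimum of these two Gaussian coefficients.

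Finally I would set $n' = n-d+1$, so that $n'-D+1 = n-d-D+2$, and substitute into the two bounds to recover exactly~(\ref{eq:singleton}). There is no genuine obstacle: essentially all the content has been pushed into the preceding theorem and into~\cite{KK}, and what remains is the index bookkeeping $n'-D+1 = n-d-D+2$ together with one use of the Gaussian-coefficient symmetry. The only points to be mildly careful about are that the minimum in~(\ref{eq:singleton}) correctly captures both the ``primal'' and ``dual'' forms of the~\cite{KK} Singleton bound across the two regimes $\ell \lessgtr (n-d+1)/2$, and that the parameter restrictions under which~\cite{KK} states its bound (roughly $D\le\ell$ and $D \le n'-\ell+1$) are innocuous for a nontrivial hybrid code.
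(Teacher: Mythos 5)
Your proposal is correct and follows exactly the route the paper intends (the corollary is stated without proof as an immediate consequence of the preceding theorem $M \le \cA_q(n-d+1,\ell,2D)$ together with the Singleton bound of~\cite{KK}): you substitute $n' = n-d+1$ into the primal form $\left[ \begin{array}{c} n'-D+1 \\ \ell-D+1 \end{array} \right]_q$ and obtain the second term of the minimum from the distance-preserving orthogonal-complement bijection plus the Gaussian-coefficient symmetry, all of which checks out, including the index bookkeeping $n'-D+1 = n-d-D+2$.
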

%%%%%%%%%%%%%%%%%%%%%%%%%%%%%%%

\begin{comment}
By applying similar argument to the dual of $\code$, we obtain that
\[
|\code| \le \left[ \begin{array}{c} n - (D-2)/2 \\ n - \ell - (D-d^\perp-1)/2 \end{array} \right]_q \; . 
\]
\end{comment}

We use the following result from~\cite{KK}. 
\begin{lemma}[Lemma 4 in~\cite{KK}]
The Gaussian coefficient ${n \brack \ell}_q$ satisfies
\[
1 <  q^{-\ell(n - \ell)} {n \brack \ell}_q < 4 \; . 
\]
\label{lemma:gauss-coeff}
\end{lemma}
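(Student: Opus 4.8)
The plan is to work directly with the product formula ${n \brack \ell}_q = \prod_{i=0}^{\ell-1}\frac{q^{n-i}-1}{q^{\ell-i}-1}$ and factor out the dominant power of $q$. Observing that
\[
\frac{q^{n-i}-1}{q^{\ell-i}-1} = q^{\,n-\ell}\cdot\frac{1-q^{-(n-i)}}{1-q^{-(\ell-i)}} ,
\]
and collecting the $\ell$ factors of $q^{\,n-\ell}$, one gets the key identity
\[
q^{-\ell(n-\ell)}\,{n \brack \ell}_q \;=\; \prod_{i=0}^{\ell-1}\frac{1-q^{-(n-i)}}{1-q^{-(\ell-i)}} .
\]
Everything then reduces to estimating this product of ratios. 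I will tacitly assume $0<\ell<n$, since for $\ell\in\{0,n\}$ the coefficient equals $1$ and the exponent vanishes, making the left inequality an equality rather than strict.

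For the lower bound I would compare the product term by term. Since $n\ge\ell$ we have $n-i\ge\ell-i$ for every $i$, hence $q^{-(n-i)}\le q^{-(\ell-i)}$ and therefore $1-q^{-(n-i)}\ge 1-q^{-(\ell-i)}>0$; so each factor is at least $1$. Moreover the $i=0$ factor equals $\frac{1-q^{-n}}{1-q^{-\ell}}$, which is \emph{strictly} greater than $1$ because $n>\ell$. Multiplying all factors gives $q^{-\ell(n-\ell)}{n \brack \ell}_q>1$.

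For the upper bound I would discard the numerators (each $1-q^{-(n-i)}<1$) and reindex with $j=\ell-i$:
\[
q^{-\ell(n-\ell)}\,{n \brack \ell}_q \;<\; \prod_{i=0}^{\ell-1}\frac{1}{1-q^{-(\ell-i)}} \;=\; \prod_{j=1}^{\ell}\frac{1}{1-q^{-j}} \;\le\; \prod_{j=1}^{\infty}\frac{1}{1-2^{-j}} ,
\]
where the last step uses that each factor $(1-q^{-j})^{-1}$ is decreasing in $q\ge 2$ and exceeds $1$, so the supremum over all $q\ge 2$ and all $\ell$ is the infinite product at $q=2$. It then remains to show $\prod_{j\ge 1}(1-2^{-j})^{-1}<4$, which I would obtain by taking logarithms and invoking the elementary inequality $-\ln(1-x)\le x+x^2$ valid for $0\le x\le\tfrac12$: this yields
\[
\sum_{j\ge 1}\bigl(-\ln(1-2^{-j})\bigr)\;\le\;\sum_{j\ge 1}\bigl(2^{-j}+4^{-j}\bigr)\;=\;1+\tfrac13\;=\;\tfrac43 ,
\]
so the product is at most $e^{4/3}<4$ (since $\tfrac43<2\ln 2$).

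The only item requiring genuine verification is the logarithm inequality, and that is routine: with $f(x)=x+x^2+\ln(1-x)$ one has $f(0)=0$ and $f'(x)=\frac{x(1-2x)}{1-x}\ge 0$ on $[0,\tfrac12]$, whence $f\ge 0$ there. I do not anticipate any real obstacle — this is a standard $q$-analogue estimate — beyond keeping track of the degenerate cases $\ell\in\{0,n\}$ that would spoil strictness of the lower bound.
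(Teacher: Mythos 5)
Your proof is correct; the paper does not actually prove this lemma (it is quoted as Lemma~4 of~\cite{KK}), and your argument is essentially the standard derivation from that reference: factor $q^{\ell(n-\ell)}$ out of the product formula, note that each remaining ratio $\frac{1-q^{-(n-i)}}{1-q^{-(\ell-i)}}$ is at least $1$ (strictly, when $0<\ell<n$) for the lower bound, and dominate the product by $\prod_{j\ge 1}(1-2^{-j})^{-1}<4$ for the upper bound. Your elementary verification that this infinite product stays below $4$, via $-\ln(1-x)\le x+x^2$ on $[0,\tfrac12]$ and $e^{4/3}<4$, checks out (it replaces the usual appeal to the numerical value $\approx 3.46$), and your caveat that the left inequality degenerates to an equality when $\ell\in\{0,n\}$ is a fair observation about the strictness of the statement as quoted.
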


We also use the following definition of the rate of the subspace code. 
\begin{definition}
The {\bf rate} of the subspace code $\cL$ is defined as $R = \frac{\log_q (|\cL|)}{n \ell}$. 
\end{definition}

Next, let 
\[
\lambda = \frac{\ell}{n}, \; \Delta = \frac{D}{\ell} \mbox{ and } 
\delta = \frac{d}{n} \; . 
\]
Thus, by using the bound in~(\ref{eq:singleton}), an asymptotic version of the latter bound reads as follows. 
\begin{corollary}
The rate of a $[n, \ell, \log_q(|\cL|), 2D, d]_q$ code $\cL$ satisfies
\begin{equation*}
\label{eq-rate}
R \le \left( 1 - \Delta + \frac{1}{\ell} \right) \left(1 - \delta - \lambda + \frac{1}{n} \right) + o(1)\; . 
\end{equation*}
\end{corollary}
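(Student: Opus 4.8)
The plan is to start from the non-asymptotic Singleton bound in Corollary~\ref{cor:singleton}, which states
\[
M \le \min \left\{ {n - d - D + 2 \brack \ell - D + 1}_q, \; {n - d - D + 2 \brack \ell}_q \right\},
\]
and to convert the Gaussian binomial coefficients into an exponential expression using Lemma~\ref{lemma:gauss-coeff}. First I would take $\log_q$ of both sides and use the fact that $\log_q M \le \log_q$ of either term in the minimum; working with the second term ${n-d-D+2 \brack \ell}_q$, Lemma~\ref{lemma:gauss-coeff} gives $\log_q {n-d-D+2 \brack \ell}_q < \ell(n - d - D + 2 - \ell) + \log_q 4$. (One should check that $n - d - D + 2 \ge \ell$ so the coefficient is well-defined and the lemma applies; otherwise $M \le 1$ and the bound is trivial.)

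Next I would divide through by $n\ell$ to pass to the rate $R = \frac{\log_q M}{n\ell}$, obtaining
\[
R < \frac{\ell(n - d - D + 2 - \ell) + \log_q 4}{n \ell}
= \frac{n - d - D + 2 - \ell}{n} + \frac{\log_q 4}{n\ell}.
\]
Then substitute the normalized parameters $\lambda = \ell/n$, $\Delta = D/\ell$ (so $D = \Delta \ell$ and $D/n = \Delta\lambda$), and $\delta = d/n$. The linear term becomes $1 - \delta - \Delta\lambda - \lambda + \frac{2}{n}$, and $\frac{\log_q 4}{n\ell} = o(1)$ as $n \to \infty$. The remaining step is purely algebraic: expand the claimed product $\left(1 - \Delta + \frac{1}{\ell}\right)\left(1 - \delta - \lambda + \frac{1}{n}\right)$ and verify it equals $1 - \delta - \lambda - \Delta(1 - \delta - \lambda) + \frac{1}{\ell}(1-\delta-\lambda) + \frac{1}{n}\left(1-\Delta+\frac{1}{\ell}\right)$, and check that this dominates (is $\ge$) the expression $1 - \delta - \Delta\lambda - \lambda$ coming from the direct estimate, up to an $o(1)$ term. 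Indeed $-\Delta(1-\delta-\lambda) = -\Delta + \Delta\delta + \Delta\lambda \ge -\Delta\lambda - \Delta + \Delta\lambda = \dots$; the key inequality is that $-\Delta\lambda \le -\Delta + \Delta\delta + \Delta\lambda$ whenever $\delta + 2\lambda \ge 1$ is not assumed, so more carefully one notes $1 - \delta - \Delta\lambda - \lambda \le (1-\Delta)(1-\delta-\lambda) + (\text{small terms})$ because $(1-\Delta)(1-\delta-\lambda) - (1 - \delta - \lambda - \Delta\lambda) = -\Delta(1-\delta-\lambda) + \Delta\lambda = -\Delta(1 - \delta - 2\lambda)$, which is nonnegative precisely when $\lambda \le (1-\delta)/2$; in the regime where the code is nontrivial this holds, and the leftover $\frac{1}{\ell}$ and $\frac{1}{n}$ terms are absorbed into $o(1)$.

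The main obstacle I anticipate is bookkeeping the lower-order terms correctly: one must be careful that the two ``$+\frac{1}{n}$''-type corrections in the stated product genuinely account for the $+\frac{2}{n}$ and the $\frac{\log_q 4}{n\ell}$ slack, and that the sign of $-\Delta(1 - \delta - 2\lambda)$ works in our favor — i.e. that we are indeed proving an upper bound of the stated form and not accidentally needing the reverse inequality. The cleanest route is probably to use the \emph{first} term ${n-d-D+2 \brack \ell-D+1}_q$ of the minimum instead, since Lemma~\ref{lemma:gauss-coeff} then yields $\log_q M < (\ell-D+1)(n-d-D+2-\ell) + \log_q 4$; dividing by $n\ell$ and writing $\frac{\ell-D+1}{\ell} = 1 - \Delta + \frac{1}{\ell}$ and $\frac{n-d-D+2-\ell}{n} = 1 - \delta - \Delta\lambda - \lambda + \frac{2}{n} \le 1 - \delta - \lambda + \frac{1}{n} + o(1)$ reproduces the claimed factored form directly, with all residual terms collected into $o(1)$. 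This second approach makes the factorization transparent and is the one I would write up.
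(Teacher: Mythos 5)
Your final recommended route---bounding $M$ by the first term ${n-d-D+2 \brack \ell-D+1}_q$ of the Singleton minimum, applying Lemma~\ref{lemma:gauss-coeff}, and dividing by $n\ell$---is exactly the paper's proof. One arithmetic slip to fix: for that Gaussian coefficient the exponent $k(N-k)$ is $(\ell-D+1)(n-d-\ell+1)$, not $(\ell-D+1)(n-d-D+2-\ell)$; with the correct value the two factors are exactly $\ell\left(1-\Delta+\frac{1}{\ell}\right)$ and $n\left(1-\delta-\lambda+\frac{1}{n}\right)$, so the claimed product appears with nothing left to absorb into the $o(1)$. The long detour through the second term ${n-d-D+2 \brack \ell}_q$ should be dropped entirely: it yields $R \lesssim 1-\delta-\lambda-\Delta\lambda$, and the comparison $1-\delta-\lambda-\Delta\lambda \le (1-\Delta)(1-\delta-\lambda)$ that you would need holds only when $\lambda \ge (1-\delta)/2$ (you stated this condition with the inequality reversed), so in the regime of main interest for these codes---$\ell$ small relative to $n$---that route does not deliver the stated bound.
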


\begin{proof}
We start with the first expression on the right-hand side of~(\ref{eq:singleton}), 
namely
\[
|\cL| \le \left[ \begin{array}{c} n - d - D + 2 \\ \ell - D + 1 \end{array} \right] \; .
\]
From Lemma~\ref{lemma:gauss-coeff}, we obtain that
\[
|\cL| < 4 \cdot q^{ (\ell - D + 1)(n - d - \ell + 1)} \; .
\]
Taking $\log_q(\cdot)$ of both sides yields
\[
\log_q(|\cL|) < \log_q(4) + (\ell - D + 1)(n - d - \ell + 1) \; ,
\]
and the required result is obtained by dividing the last inequality by $n \ell$, i.e.
\begin{eqnarray*}
R = \frac{\log_q(|\cL|)}{n \ell} & < & \frac{\ell - D + 1}{\ell} \cdot \frac{n - d - \ell + 1}{n} + o(1) \\
& = & \left( 1 - \Delta + \frac{1}{\ell} \right) \left(1 - \delta - \lambda + \frac{1}{n} \right) + o(1)
\; .
\end{eqnarray*}

\end{proof}

\begin{comment}
\begin{example}
Take a code $\cL$ as in Section~\ref{sec:code-L}. The code $\cL$ is $[\ell+m+d, \ell, mk, \ge 2 (\ell - k + 1), \ge d]_q$.
The rate of this code is 
\[
R \ge \frac{mk}{\ell \cdot (\ell+m+d)} = \frac{(n -\ell-d)(2 \ell + 2 - D)}{2 \ell \cdot n}
= ( 1 - \lambda - \delta)(1 - \Delta + \frac{1}{\lambda n}) + O(1/n) \; . 
\]
\end{example}
\end{comment}

\subsection{Sphere-Packing Bound}

We start with the following definition. 
\begin{definition} 
A matrix $\bldM$ over $\ff_q$ is said to be in a {\bf reduced row echelon form} if the following conditions hold:
\begin{itemize}
\item
Each nonzero row in $\bldM$ has more leading zeros than the previous row.
\item
The leftmost nonzero entry in each row in $\bldM$ is one.
\item
Every leftmost nonzero entry in each row is the only nonzero entry in its
column.
\end{itemize}
\end{definition}
It is well known that any $\ell$-dimensional subspace of $\ff_q^n$ can be \emph{uniquely} represented by a 
$\ell \times n$ matrix over $\ff_q$ in reduced row echelon form.
\medskip

Let $W_\cL$ be the ambient space $\ff^n_q$, and let $0 \le \ell \le n$. 
Fix two integers $T \in [0, 2\ell]$, and $t \in [0, n-1]$. Two vector spaces $U, V \in \cP(W_\cL, \ell)$ are called 
\emph{$(T, t)$-adjacent} if there exists a set of coordinates 
$S = \{ i_1, i_2, \cdots, i_s\} \subseteq [n]$, $s \le t$, and a vector space $W_{S}$ such that 
\[
 W_\cL = W_{S} \bigodot \langle \blde_{i_1}, \blde_{i_2}, \cdots, \blde_{i_s} \rangle \; ,  
\] 
and
\[
\Distance(U|_{W_S}, V|_{W_S}) \le T \; . 
\]
Note that the adjacency relation is symmetric with respect to the order of $U, V$, namely $U$ and $V$ are 
$(T, t)$-adjacent if and only if $V$ and $U$ are $(T, t)$-adjacent. 
 
Assume that the $[n, \ell, \log_q(|\cL|), 2D, d]_q$ 
hybrid code $\cL$ is used over a network. Let $V \in \cL$ be the transmitted subspace, 
and let $U|_{W_S} \subseteq \ff_q^{n-t}$ be the received subspace, 
for some $U \in \cP(W_\cL, \ell)$, and for some $W_S$ as above, as a result of $T$ dimension losses and gains, 
and $t$ symbol erasures. Then, $U$ and $V$ are $(T, t)$-adjacent. If there is no other 
codeword $\tilde{V} \in \cL$ such that $\tilde{V}$ and $U$ are $(T, t)$-adjacent, then the decoder, which
is able to correct $T$ dimension erasures/gains and $t$ symbol erasures, can recover $V$ 
from $U$. This observation motivates the following definition. 

\begin{definition}
Let $W_\cL$ be a vector space $\ff_q^n$, and let $V \in \cP(W_\cL, \ell)$. The sphere $\cS(V, \ell, T, t)$ around $V$ is   
defined as 
\begin{multline*}
\cS(V, \ell, T, t) = \{ U \in \cP(W_\cL, \ell) \; : \; \\ V \mbox{ and } U \mbox{ are  $(T, t)$-adjacent} \} \; . 
\end{multline*}
\end{definition}

Now, we recall the following result from~\cite{KK}.  
\begin{theorem}[Theorem 5 in~\cite{KK}]
For any $V \in \cP(W_\cL, \ell)$, and any $0 \le T \le 2 \ell$, 
\[
|\cS(V, \ell, T, 0)| = \sum_{i=0}^{T/2} q^{i^2} {\ell \brack i} {n-\ell \brack i} \; . 
\]
\end{theorem}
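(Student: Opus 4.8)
The statement is Theorem~5 of~\cite{KK}, so a proof is essentially known; here is the route I would take. The plan is to reduce the statement to a direct count of subspaces at a prescribed Grassmannian distance. First I would observe that when $t=0$ the coordinate set $S$ in the definition of adjacency must be empty, so $W_S=W_\cL$ and the condition defining $\cS(V,\ell,T,0)$ collapses to $\Distance(U,V)\le T$. Thus $\cS(V,\ell,T,0)$ is simply the ball of radius $T$ about $V$ in $\cP(W_\cL,\ell)$ under the metric $\Distance$. Since every $U\in\cP(W_\cL,\ell)$ has $\dim(U)=\dim(V)=\ell$, we get $\Distance(U,V)=2\ell-2\dim(U\cap V)=2(\ell-\dim(U\cap V))$, which is always even; writing $i=\ell-\dim(U\cap V)$, membership in the ball is equivalent to $0\le i\le \lfloor T/2\rfloor$. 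Hence it suffices to prove that, for each fixed $i$ with $0\le i\le \ell$, the number $N_i$ of $U\in\cP(W_\cL,\ell)$ with $\dim(U\cap V)=\ell-i$ equals $q^{i^2}{\ell\brack i}_q{n-\ell\brack i}_q$, and then sum over $i=0,1,\ldots,\lfloor T/2\rfloor$.

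To compute $N_i$, I would set up the bijection $U\mapsto(A,\bar U)$, where $A:=U\cap V$ and $\bar U:=U/A$ is viewed inside the quotient space $W_\cL/A$. One first chooses $A$, an $(\ell-i)$-dimensional subspace of the $\ell$-dimensional space $V$; there are ${\ell\brack \ell-i}_q={\ell\brack i}_q$ such choices. Given $A$, the subspaces $U$ with $A\subseteq U$, $\dim(U)=\ell$, and $U\cap V=A$ correspond, under the quotient map modulo $A$, to the $i$-dimensional subspaces $\bar U$ of $W_\cL/A$ (a space of dimension $n-\ell+i$) with $\bar U\cap(V/A)=\{\bldzero\}$, where $V/A$ has dimension $i$. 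Here one invokes the modular law to verify that, if $U=A\oplus C$ with $C\cap V=\{\bldzero\}$, then $U\cap V=A$ \emph{exactly} and not merely $U\cap V\supseteq A$; this is precisely what forces $\dim(U\cap V)=\ell-i$ rather than $\ge \ell-i$, and it also makes the correspondence a genuine bijection onto pairs (both $A$ and $\bar U$ are recovered from $U$).

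It then remains to count the admissible $\bar U$, which is a routine ordered-basis enumeration in $W_\cL/A$: the number of $i$-tuples $(\bar u_1,\ldots,\bar u_i)$ with $\bar u_{j+1}\notin\langle V/A,\bar u_1,\ldots,\bar u_j\rangle$ for every $j$ is $\prod_{j=0}^{i-1}(q^{n-\ell+i}-q^{i+j})$, each such tuple spans an admissible $\bar U$, and each admissible $\bar U$ is spanned by $\prod_{j=0}^{i-1}(q^i-q^j)$ ordered bases. Dividing these two products and collecting powers of $q$ — the numerator contributes $q^{\sum_{j=0}^{i-1}(i+j)}=q^{i^2+\binom{i}{2}}$ and the denominator $q^{\binom{i}{2}}$ — leaves $q^{i^2}$ times a ratio of $(q^{a}-1)$-type factors that is exactly ${n-\ell\brack i}_q$. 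Therefore $N_i={\ell\brack i}_q\cdot q^{i^2}{n-\ell\brack i}_q$, and summing over $i=0,\ldots,\lfloor T/2\rfloor$ gives the claimed formula. The only steps requiring care are this final bookkeeping of the $q$-exponents and the modularity argument that the intersection with $V$ is exactly $A$; everything else is immediate.
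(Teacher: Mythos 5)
The paper does not prove this statement at all---it is quoted verbatim as Theorem~5 of~\cite{KK}---so there is no internal proof to compare against. Your argument is correct and complete: reducing to the count $N_i$ of subspaces with $\dim(U\cap V)=\ell-i$, choosing $A=U\cap V$ in ${\ell\brack i}_q$ ways, passing to the quotient $W_\cL/A$, and counting complements of $V/A$ by ordered bases gives exactly $q^{i^2}{\ell\brack i}_q{n-\ell\brack i}_q$, and this is essentially the standard derivation used in~\cite{KK} itself.
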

We generalize this theorem in the following way. 
\begin{theorem}
\label{thrm:sphere-size}
Let $\cL$ be a $[n, \ell, \log_q(|\cL|), 2D, d]_q$ code. 
For any $V \in \cL$, any $0 \le T \le 2 \ell$, and any $0 \le t \le d-1$, 
\[
|\cS(V, \ell, T, t)| \ge q^{\ell t} \cdot \sum_{i=0}^{T/2} q^{i^2} {\ell \brack i} {n-t-\ell \brack i} \; .
\]
\end{theorem}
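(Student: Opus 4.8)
The plan is to exhibit an explicit family of $\ell$-dimensional subspaces that are $(T,t)$-adjacent to $V$, and count them. First I would fix a single coordinate set $S$ witnessing the adjacency: choose any $S = \{i_1,\dots,i_t\} \subseteq [n]$ of size exactly $t$, and write $W_\cL = W_S \bigodot \langle \blde_{i_1},\dots,\blde_{i_t}\rangle$ in the orthogonal sense used earlier, so that $W_S \cong \ff_q^{n-t}$. The projection $V|_{W_S}$ is an $\ell$-dimensional subspace of $W_S$ (using property~3 of the hybrid code, $\dim(V|_{W_S})=\ell$ since $t \le d-1$). Any $\ell$-dimensional $U' \subseteq W_S$ with $\Distance(V|_{W_S}, U') \le T$ can then be ``lifted'' back to an $\ell$-dimensional $U \subseteq W_\cL$ with $U|_{W_S} = U'$, and such a $U$ is $(T,t)$-adjacent to $V$. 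So the strategy is: (i) count the $U'$ in the ambient space $W_S \cong \ff_q^{n-t}$ that are within distance $T$ of $V|_{W_S}$ — this is exactly $|\cS(V|_{W_S}, \ell, T, 0)|$, which by Theorem~5 of~\cite{KK} equals $\sum_{i=0}^{T/2} q^{i^2} {\ell \brack i}{n-t-\ell \brack i}$; (ii) show that each such $U'$ has at least $q^{\ell t}$ distinct lifts $U$ in $W_\cL$; (iii) argue that the lifts arising from distinct $U'$ (even across a fixed $S$) are themselves distinct, so no overcounting occurs within this family, yielding the product lower bound.

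For step~(ii), the key observation is that the lifts of a fixed $\ell$-dimensional $U' \subseteq W_S$ to an $\ell$-dimensional $U \subseteq W_\cL = W_S \bigodot \langle \blde_{i_1},\dots,\blde_{i_t}\rangle$ with $U|_{W_S} = U'$ are parametrized by the choice, for each basis vector of $U'$, of an arbitrary vector in $\langle \blde_{i_1},\dots,\blde_{i_t}\rangle$ to add into the $S$-coordinates. Concretely, represent $U'$ by its $\ell \times (n-t)$ reduced row echelon matrix; a lift is obtained by filling in the $t$ columns indexed by $S$ with an arbitrary $\ell \times t$ matrix over $\ff_q$. Different $\ell \times t$ fill-in matrices give different subspaces $U$ (one can see this by noting that the row space together with the fixed pivot structure of $U'$ determines the fill-in uniquely — i.e. the map from $U$ to its fill-in is injective on lifts of a given $U'$). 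Hence there are exactly $q^{\ell t}$ lifts of each $U'$, giving the factor $q^{\ell t}$.

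The main obstacle I anticipate is step~(iii): making sure the counted subspaces are genuinely distinct, so that the product $q^{\ell t} \cdot |\cS(V|_{W_S},\ell,T,0)|$ is an honest lower bound on $|\cS(V,\ell,T,t)|$ and not an overcount. Within a single fixed $S$ this is clean — two lifts $U_1, U_2$ with $U_1|_{W_S} = U_1' \ne U_2' = U_2|_{W_S}$ are automatically distinct, and lifts of the same $U'$ are distinguished by their fill-in matrices as above — so restricting attention to one fixed choice of $S$ already suffices to prove the stated inequality, and I would do exactly that rather than attempt to sum over all $S$ (which would require an inclusion–exclusion that is not needed for a lower bound). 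A secondary point to handle carefully is confirming that every lift $U$ really is $(T,t)$-adjacent to $V$: this is immediate from the definition, since $S$ itself is the witnessing coordinate set ($|S| = t \le t$) and $\Distance(U|_{W_S}, V|_{W_S}) = \Distance(U', V|_{W_S}) \le T$ by construction. Assembling (i)–(iii) over one fixed $S$ then gives $|\cS(V,\ell,T,t)| \ge q^{\ell t}\sum_{i=0}^{T/2} q^{i^2}{\ell \brack i}{n-t-\ell \brack i}$, as claimed.
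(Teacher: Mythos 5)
Your proposal is correct and follows essentially the same route as the paper: fix one coordinate set $S$ of size $t$, count the $\ell$-dimensional subspaces $U'\subseteq W_S\cong\ff_q^{n-t}$ within subspace distance $T$ of $V|_{W_S}$ via Theorem 5 of~\cite{KK}, and then show each such $U'$ lifts to exactly $q^{\ell t}$ distinct $(T,t)$-adjacent subspaces $U$ by filling in the $t$ columns indexed by $S$ of the $\ell\times(n-t)$ reduced row echelon matrix of $U'$ arbitrarily. Your added care about distinctness of the lifts and about $\dim(V|_{W_S})=\ell$ (via property 3 and $t\le d-1$) only makes explicit what the paper leaves implicit.
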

\begin{proof}
Take a set $S = \{ n-t+1, n-t+2, \cdots, n \} $ of cardinality $t$. Let $W_\cL = \ff^n_q$, 
and take a vector space $W_{S}$ given by 
\[
 W_\cL = W_{S} \bigodot \langle \blde_{n-t+1}, \blde_{n-t+2}, \cdots, \blde_{n} \rangle \; .  
\] 
Fix some $V \in W_\cL$ and consider an arbitrary $U \in W_\cL$, such that 
$V$ and $U$ are $(T,t)$-adjacent. Define $V' \triangleq V|_{W_S} \subseteq \ff_q^{n-t}$ and 
$U' \triangleq U|_{W_S} \subseteq \ff_q^{n-t}$. 
Then, by the definition of $(T,t)$-adjacency, 
\begin{equation}
\Distance(V', U') \le T \; . 
\label{eq:dist-T}
\end{equation}
Therefore, for a given $V'$, the number of subspaces $U'$ satisfying~(\ref{eq:dist-T}) is given by  
\[
\sum_{i=0}^{T/2} q^{i^2} {\ell \brack i} {n-t-\ell \brack i} \; .
\]

Next, we estimate the number of different subspaces $U \in \ff_q^n$ such that $U|_{W_S} = U'$ for a given $U'$. 
Consider the reduced row echelon form matrix $\bldM'$ of dimension $\ell \times (n-t)$ representing $U'$. In order to obtain 
this matrix, we need to establish the values 
of the last $t$ entries in every row of $\bldM$, while the first $n-t$ entries are equal to their counterparts
in $\bldM'$. There are $\ell$ rows in $\bldM$, and each entry can take one of $q$ values. 
As each $U'$ yields $q^{\ell t}$ different choices of $U$, the claimed result follows. 
\end{proof}

The following example further illustrates the idea of the proof. 
\begin{example}
Assume that $n = 9$, $\ell = 3$, $T=2$ and $t=4$, and that the reduced row echelon form of $V$ is given by 
the following matrix over $\ff_2$: 
\begin{eqnarray*}
\left( \begin{array}{ccccc|cccc} 
1 & 0 & 0 & 1 & 0 & 0 & 1 & 1 & 0 \\
0 & 1 & 1 & 0 & 0 & 1 & 1 & 1 & 0 \\
0 & 0 & 0 & 0 & 1 & 0 & 0 & 1 & 0 
\end{array}\right) \; . 
\end{eqnarray*}
Then, the reduced row echelon form of $V'$ is given by the first five columns of the above matrix. 
Consider $U'$ such that $\Distance(V',U') \le 2$. Such a $U'$ can be obtained when one 
of the rows in the reduced row echelon form of $V'$ is replaced by a different vector.
One possible reduced row echelon form of $U'$ is the sub-matrix formed by the first 5 columns of 
the matrix below.  
Then, in order to find all possible options for $U$, we need to fill in the values of the black dots 
in the last $t=4$ columns of the following matrix
\begin{eqnarray*}
& \left. \left( \begin{array}{ccccc|cccc} 
1 & 0 & 1 & 1 & 0 & \bullet & \bullet & \bullet & \bullet \\
0 & 1 & 1 & 0 & 0 & \bullet & \bullet & \bullet & \bullet \\
0 & 0 & 0 & 0 & 1 & \bullet & \bullet & \bullet & \bullet 
\end{array}\right)   
\right\} \ell . \\ 
& \underbrace{\phantom{ooooooooooooooo}}_{n-t} \underbrace{\phantom{oooooooooooo}}_{t} \phantom{oo}
\end{eqnarray*}
This can be done in $q^{12}$ ways. 

\end{example}
%-------------------------------- End of the new part ------------------------

From Theorem~\ref{thrm:sphere-size}, the following sphere-packing bound is immediate. 
\begin{corollary}
Let $\cL \subseteq \cP(W_\cL, \ell)$ be a code that corrects $D-1$ dimension losses/gains and $d-1$ symbol erasures. 
Then, for all $V \in \cL$, the spheres $\cS(V,\ell, D-1, d-1)$ are disjoint. Therefore,  
\begin{eqnarray}
|\cL| & \le & \frac{|\cP(W_\cL, \ell)|}{|\cS(V,\ell, D-1, d-1)|} \nonumber \\ 
& \le &  
\frac{{n \brack \ell}}{  q^{\ell (d-1)} \cdot \sum_{i=0}^{(D-1)/2} q^{i^2} {\ell \brack i} {n-(d-1)-\ell \brack i}} \; . 
\label{eq:sphere-packing}
\end{eqnarray}
\end{corollary}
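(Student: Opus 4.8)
The plan is to derive the sphere-packing bound directly from the fact that for a $(D,d)$-hybrid code, the spheres $\cS(V,\ell,D-1,d-1)$ centered at distinct codewords must be pairwise disjoint, and then to substitute the lower bound on the sphere size from Theorem~\ref{thrm:sphere-size}. First I would establish the disjointness claim: if some $U \in \cP(W_\cL,\ell)$ belonged to both $\cS(V_1,\ell,D-1,d-1)$ and $\cS(V_2,\ell,D-1,d-1)$ for distinct $V_1, V_2 \in \cL$, then $U$ would be $(D-1,d-1)$-adjacent to each of $V_1, V_2$, meaning there exist coordinate sets $S_1, S_2$ of size at most $d-1$ with $\Distance(U|_{W_{S_i}}, V_i|_{W_{S_i}}) \le D-1$ for $i=1,2$.

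The main obstacle is that the two adjacencies may use \emph{different} erasure coordinate sets $S_1 \ne S_2$, whereas property 4) of Definition~\ref{def:codes} only guarantees a lower bound on $\Distance(V_1', V_2')$ when the erasures are in the \emph{same} coordinates. To handle this I would note that what we actually need is the operational statement already proved in the theorem following Definition~\ref{def:codes}: a $(D,d)$-hybrid code corrects any pattern of $D-1$ dimension errors and $d-1$ symbol erasures. Concretely, if $U$ were $(D-1,d-1)$-adjacent to both $V_1$ and $V_2$, then $U$ is a legitimate channel output reachable from $V_1$ (via dimension errors realizing the Grassmannian distance $\le D-1$ on the unerased coordinates, followed by the erasures in $S_1$) and also reachable from $V_2$; since the decoder guaranteed by that theorem must output a unique codeword on input $U$, this forces $V_1 = V_2$, a contradiction. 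Hence the spheres are disjoint.

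With disjointness in hand, the counting argument is routine: the disjoint union $\bigcup_{V \in \cL} \cS(V,\ell,D-1,d-1)$ sits inside $\cP(W_\cL,\ell)$, so
\[
|\cL| \cdot \min_{V \in \cL} |\cS(V,\ell,D-1,d-1)| \le \sum_{V \in \cL} |\cS(V,\ell,D-1,d-1)| \le |\cP(W_\cL,\ell)| = {n \brack \ell}_q \; ,
\]
which gives the first inequality in~(\ref{eq:sphere-packing}). Then I would invoke Theorem~\ref{thrm:sphere-size} with $T = D-1$ and $t = d-1$ (both in the allowed ranges, since $d-1 \le d-1$ and $D-1 \le 2\ell$ for a nontrivial code) to replace each sphere size by the lower bound $q^{\ell(d-1)} \sum_{i=0}^{(D-1)/2} q^{i^2} {\ell \brack i}_q {n-(d-1)-\ell \brack i}_q$, and dividing through yields the stated bound. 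The only minor point to check is that $T = D-1$ may be odd while the sphere formula sums up to $T/2$; I would interpret $\sum_{i=0}^{(D-1)/2}$ as $\sum_{i=0}^{\lfloor (D-1)/2 \rfloor}$, consistent with the usage in the cited Theorem~5 of~\cite{KK}, so no substantive issue arises.
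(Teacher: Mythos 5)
Your counting step and the application of Theorem~\ref{thrm:sphere-size} with $T=D-1$, $t=d-1$ (including the remark about reading the upper summation limit as $\lfloor (D-1)/2\rfloor$, which is harmless since distances between equal-dimensional subspaces are even) are fine, and they coincide with what the paper does --- the paper in fact offers no argument at all for disjointness and declares the corollary ``immediate.'' The difficulty is precisely in the one step you try to supply: the disjointness of the spheres. You correctly flag the obstacle that the two adjacencies may be witnessed by different erasure sets $S_1\neq S_2$, but the proposed resolution does not overcome it. If $U\in\cP(W_\cL,\ell)$ is $(D-1,d-1)$-adjacent to $V_1$ via $S_1$ and to $V_2$ via $S_2$, the corresponding channel outputs are $U|_{W_{S_1}}$ with erasures marked in the coordinates of $S_1$, and $U|_{W_{S_2}}$ with erasures marked in $S_2$. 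When $S_1\neq S_2$ these are \emph{different} received words --- the receiver observes which coordinates carry the symbol $?$ --- so the decoder is never confronted with a single ambiguous input, and its guaranteed uniqueness on each actual received word produces no contradiction. Your argument tacitly identifies the full-length subspace $U$ with both of its (distinct) projections; it is valid only in the case $S_1=S_2$, where property 4) of Definition~\ref{def:codes} together with the triangle inequality indeed rules out ambiguity.

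A distance-based attempt fares no better: projecting onto the union $S=S_1\cup S_2$ and using Lemma~\ref{lemma:projection-distance} with the triangle inequality gives only $\Distance(V_1|_{W_S},V_2|_{W_S})\le 2(D-1)$ for a set $S$ of size up to $2(d-1)$, while Definition~\ref{def:codes} controls projections only for erasure sets of size at most $d-1$. So the disjointness claim, with the ``there exists $S$'' quantifier in the definition of $(T,t)$-adjacency, does not follow from the stated code properties by your route. Closing the gap requires an additional idea --- e.g., a property of the code under up to $2(d-1)$ coordinate erasures, or recasting the spheres as sets of genuine received words (erasure set together with the projected subspace), for which disjointness does follow from correctability and the packing count must then be redone in that larger space.
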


Now, we turn to an asymptotic analysis of the bound~(\ref{eq:sphere-packing}). From Lemma~\ref{lemma:gauss-coeff}, we have 
\begin{eqnarray*}
|\cL| & \le & 
\frac{4 q^{\ell(n- \ell)}}{q^{\ell (d-1)} \cdot \sum_{i=0}^{(D-1)/2} q^{i^2 + i(\ell-i) + i(n-(d-1)-\ell-i)}} \\ 
& = & \frac{4 q^{\ell(n- \ell)}}{\sum_{i=0}^{(D-1)/2} q^{\ell (d-1)} \cdot q^{i(n-(d-1)-i)}} 
\; . 
\end{eqnarray*}
If $D-1 \le n - (d-1)$, then the dominant term in 
$\sum_{i=0}^{(D-1)/2} q^{i(n-(d-1)-i)}$ is obtained when $i = (D-1)/2$.
%Similarly, if $\ell \le n/2$, since $D-1 < \ell$, 
%then clearly the dominating term in the denominator
%is obtained when $s = d-1$.  
In that case, one has
\begin{eqnarray*}
|\cL| & \le & \frac{4 q^{\ell(n- \ell)}}{q^{\ell (d-1)} \cdot q^{(D-1)(n-(d-1)-(D-1)/2)/2}} \\
& \equival & 4 q^{ \ell(n - d - \ell + 1) - (D-1)(n-d-(D-1)/2+1)/2}
\; , 
\end{eqnarray*}
where $f(x) \equival g(x)$
means that the two expressions $f(x)$ and $g(x)$ are asymptotically equal.  

By taking the base-$q$ logarithm and dividing both sides of the above expression by $\ell n$, we obtain the following result. 
\begin{corollary}
Let $\cL \subseteq \cP(W_\cL, \ell)$ be a code that corrects $D-1$ dimension losses/gains and $d-1$ symbol erasures. 
Then, its rate satisfies: 
\begin{multline*}
R \le \left(1 - \delta - \lambda + \frac{1}{n} \right) \\ 
- \left(\frac{\Delta}{2} - \frac{1}{2 \ell} \right) \left(1 - \delta - \frac{\lambda \Delta}{2} + \frac{3}{2n} \right) 
\; + \; o(1) \; . 
\end{multline*}
\end{corollary}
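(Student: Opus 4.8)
The plan is to derive the asymptotic rate bound directly from the sphere-packing bound in~(\ref{eq:sphere-packing}) by taking base-$q$ logarithms and normalizing by $\ell n$, exactly mirroring the proof of the Singleton asymptotic corollary. Concretely, I would start from the estimate already established just before the statement, namely
\[
|\cL| \;\equival\; 4\, q^{\,\ell(n-d-\ell+1)\,-\,(D-1)(n-d-(D-1)/2+1)/2}\;,
\]
which holds under the assumption $D-1 \le n-(d-1)$ (so that the dominant term in the sum $\sum_{i=0}^{(D-1)/2} q^{i(n-(d-1)-i)}$ is the one with $i=(D-1)/2$). Taking $\log_q(\cdot)$ of both sides gives
\[
\log_q(|\cL|) \le \log_q(4) + \ell(n-d-\ell+1) - \tfrac{1}{2}(D-1)\!\left(n-d-\tfrac{D-1}{2}+1\right) + o(\ell n)\;.
\]

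Next I would divide through by $\ell n$ and substitute the normalized parameters $\lambda = \ell/n$, $\Delta = D/\ell$, $\delta = d/n$. The first term $\ell(n-d-\ell+1)/(\ell n)$ becomes $(n-d-\ell+1)/n = 1 - \delta - \lambda + \tfrac{1}{n}$, which matches the first line of the claimed bound. For the second term, write $(D-1)/2 = \tfrac{D}{2} - \tfrac{1}{2} = \tfrac{\lambda\Delta n}{2} - \tfrac12$, and $n-d-(D-1)/2+1 = n - d - \tfrac{D}{2} + \tfrac{3}{2} = n\left(1 - \delta - \tfrac{\lambda\Delta}{2} + \tfrac{3}{2n}\right)$. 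Dividing the product of these two by $\ell n = \lambda n^2$ yields
\[
\frac{1}{\lambda n^2}\left(\frac{\lambda\Delta n}{2} - \frac12\right) n\left(1 - \delta - \frac{\lambda\Delta}{2} + \frac{3}{2n}\right)
= \left(\frac{\Delta}{2} - \frac{1}{2\ell}\right)\left(1 - \delta - \frac{\lambda\Delta}{2} + \frac{3}{2n}\right)\;,
\]
using $\lambda n = \ell$. The $\log_q(4)/(\ell n)$ term is absorbed into $o(1)$, and combining the two pieces with the correct sign gives precisely the stated inequality.

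The only genuinely delicate point — the ``main obstacle'' such as it is — is verifying that the geometric-type sum is dominated by its last term, i.e.\ that the exponent $i(n-(d-1)-i)$ is increasing in $i$ over the range $0 \le i \le (D-1)/2$; this is where the hypothesis $D-1 \le n-(d-1)$ is used, since the parabola $i \mapsto i(n-(d-1)-i)$ has its vertex at $i=(n-(d-1))/2 \ge (D-1)/2$ and is therefore monotone increasing on the relevant interval, so the sum is at most $\tfrac{D+1}{2}$ times its largest term and at least that largest term, which is enough for the $\equival$ estimate. Everything else is routine bookkeeping: tracking the $+1$'s, the factors of $\tfrac12$, and making sure all lower-order contributions collapse into $o(1)$ after division by $\ell n$. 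I would present the argument in the same compact style as the Singleton corollary's proof, citing Lemma~\ref{lemma:gauss-coeff} for the bracketing of the Gaussian coefficients and referring back to the displayed estimate immediately preceding the corollary rather than re-deriving it.
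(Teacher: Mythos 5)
Your proposal is correct and follows exactly the route the paper takes: it starts from the displayed estimate $|\cL| \equival 4\,q^{\ell(n-d-\ell+1)-(D-1)(n-d-(D-1)/2+1)/2}$ obtained from the sphere-packing bound via Lemma~\ref{lemma:gauss-coeff} and the dominant-term argument under $D-1 \le n-(d-1)$, then takes $\log_q$ and divides by $\ell n$. Your bookkeeping with $\lambda n = \ell$, $D = \lambda\Delta n$, and $\delta = d/n$ reproduces both factors of the stated bound exactly, and your justification of the dominant term via the vertex of $i\mapsto i(n-(d-1)-i)$ is precisely the monotonicity fact the paper relies on; the paper simply leaves these computations implicit.
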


\section{Code Construction}
\label{sec:construction}

Next, we construct hybrid codes capable of correcting dimension losses/gains and symbol erasures simultaneously. 
We show that these code are asymptotically optimal with respect 
to the Singleton bound. We also provide some examples comparing hybrid codes to subspace codes. 

\subsection{Code Construction}
\label{sec:construct}

Let $W$ be a vector space $\ff_q^{\ell + m}$ of dimension $\ell + m$, and let $\Code$ be 
a $[\ell + m, \ell, \log(|\Code|), 2D]$ subspace code in $W$. In other words, 
$\Code$ is a set of subspaces of $W$ of dimension $\ell$, such that for any $U, V \in \Code$, $V \neq U$, 
$\dim(U \cap V) \le \ell - D$. We fix a basis of $W$, and denote its vectors 
by $\{ \bldu_1, \bldu_2, \cdots, \bldu_{\ell + m} \}$. We denote the decoder for the subspace metric 
and $\Code$ by $\cD_\Code$. 

Let $\bldG$ be a $(\ell+m) \times n$ generator matrix of the $[n, \ell+m, d]$ Generalized Reed-Solomon (GRS) code $\code$ over $\ff_q$ of length $n \define \ell+m+d-1$, given by 
\begin{multline*}
\bldG = \left( \begin{array}{ccccc}
1 & 1 & 1 & \cdots & 1 \\
a_1 & a_2 & a_3 & \cdots & a_n \\
a_1^2 & a_2^2 & a_3^2 & \cdots & a_n^2 \\
\vdots & \vdots & \vdots & \ddots & \vdots \\
a_1^{\ell+m-1} & a_2^{\ell+m-1} & a_3^{\ell+m-1} & \cdots & a_n^{\ell+m-1} \\ 
\end{array} \right) \\ 
\cdot 
\left( \begin{array}{cccc}
\eta_1 &  &  & 0 \\
& \eta_2 &  & \\
& & \ddots & \\
0 & & & \eta_n \\ 
\end{array} \right). 
\end{multline*}
Here, $a_i \in \ff_q$, $1 \le i \le n$, denote $n$ distinct nonzero field elements, while $\eta_i \in \ff_q$, $1 \le i \le n$, denote arbitrary nonzero elements (see~\cite[Chapter 5]{Roth-book} for more details).  

We use the notation $\bldG_i$ for the $i$-th row of $\bldG$, for $i = 1, 2, \cdots, \ell+m$. 
The code $\code$ is capable of correcting any 
error pattern of $\rho$ errors and $\mu$ erasures given that $2 \rho + \mu \le d-1$. In this section, we are particularly interested in the case when $\rho=0$.

Denote by $\cD_{RS}$ a decoder for the code $\code$, which corrects any 
error pattern of $\rho$ errors and $\mu$ erasures, whenever $2 \rho + \mu \le d-1$. 
Denote by $W_\cL$ the linear space $\ff_q^n$. 
Let $\bldA$ be a $(\ell+m) \times (\ell+m)$ matrix over $\ff_q$ such that 
\[
\forall i = 1, 2, \cdots, \ell+m \; : \; \blde_i = \bldu_i \bldA \; , 
\]
and therefore 
\[
\forall i = 1, 2, \cdots, \ell+m \; : \; \bldG_i = \bldu_i \bldA \bldG \; . 
\]
Clearly, such an $\bldA$ exists since $\{ \blde_i \}$ and $\{ \bldu_i \}$ are two different bases for $\ff_q^{\ell+m}$. 

We define a linear mapping $\cE_\cL : W \rightarrow W_\cL$ as follows. For an arbitrary vector $\bldv \in W$, 
\[
\cE_\cL (\bldv) = \bldv \bldA \bldG \; . 
\]
This mapping can be naturally extended to the mapping 
$\cE_\cL : \cP(W) \rightarrow \cP(\code)$ (with the slight abuse of notation), 
where $\cP(\code)$ stands for 
a set of all linear subcodes of $\code$. For any $V \in \cP(W)$, we have 
\[
\cE_\cL (V) \define \{ \bldv \bldA \bldG \; : \; \bldv \in V \} \in \cP(\code) \; . 
\]
It is easy to see that $\cE_\cL$ is a linear mapping, and that the image of the linear space $V$ is a linear space.  
Moreover, it is straightforward to show that this mapping, when applied to subspaces of $W$, is one-to-one. 
%Moreover, it maps $\ell$-dimensional supspaces in $W$ onto $\ell$-dimensional subcodes of $\code$. 
Thus, for any $V \in W$, 
\begin{equation}
\dim(V) = \dim( \cE_\cL(V) ) \; . 
\label{eq:mapping-1}
\end{equation}
One can check that for any $U,V \in W$, it holds
\begin{equation}
\dim(U \cap V) = \dim( \cE_\cL(U) \cap \cE_\cL(V) ) \; . 
\label{eq:mapping-2}
\end{equation}

Next, we define a code $\cL \subseteq \cP(W_\cL, \ell)$ as 
\[
\cL = \left\{ \cE_\cL(V)  \; : \; V \in \Code \right\} \; .   
\]

\begin{theorem}
The code $\cL$ is a hybrid code over $\ff_q$, with parameters $[n, \ell, \log_q(| \Code |), \ge 2D, \ge d]$. 
\end{theorem}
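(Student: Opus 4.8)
The plan is to verify the four defining properties of a $(D,d)$-hybrid code from Definition~\ref{def:codes} for $\cL = \{\cE_\cL(V) : V \in \Code\}$ and then simply read off the parameter list. Properties 1 and 2 are immediate from the structure of $\cE_\cL$: by~(\ref{eq:mapping-1}) every member of $\cL$ has dimension $\ell$ since every $V \in \Code$ does; and by~(\ref{eq:mapping-2}) together with the injectivity of $\cE_\cL$ on subspaces, $\dim(\cE_\cL(U) \cap \cE_\cL(V)) = \dim(U \cap V) \le \ell - D$ for distinct $U, V \in \Code$, so $\Distance(\cE_\cL(U), \cE_\cL(V)) \ge 2D$. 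In particular $|\cL| = |\Code|$, which accounts for the $\log_q(|\Code|)$ entry in the parameter list.

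The core of the argument -- and the only place the Reed--Solomon structure enters -- concerns symbol erasures. First I would observe that $\cE_\cL(W) = \{\bldv \bldA \bldG : \bldv \in W\}$ is exactly the rowspace of $\bldG$, i.e. the whole GRS code $\code$ (since $\bldA$ is invertible); hence $\cE_\cL(V) \subseteq \code$ for every $V \in \Code$, and more generally every subspace in $\cL$ is a linear subcode of $\code$. Now fix a coordinate set $S \subseteq [n]$ with $|S| = \mu \le d-1$ and let $\pi_S$ denote deletion of the coordinates in $S$; this is precisely the composition of the $\mu$ erasure maps $\Psi_j$, $j \in S$, after passing to $W_S = W_\cL|_S$. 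Since $\code$ has minimum Hamming distance $d$, no nonzero codeword of $\code$ is supported inside $S$, which is exactly the statement that $\pi_S$ is injective on $\code$. From this everything follows: for $V \in \Code$ we get $\dim(\pi_S(\cE_\cL(V))) = \dim(\cE_\cL(V)) = \ell$, which is the dimension claim in property 3; and if $X \in \cL$ satisfies $\pi_S(X) = \pi_S(\cE_\cL(V))$, then, both $X$ and $\cE_\cL(V)$ being subspaces of $\code$, injectivity of $\pi_S$ forces $X = \cE_\cL(V)$, so $V$ is the unique pre-image -- property 3 in full.

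For property 4, let $U, V \in \Code$ and erase the same set $S$, $|S| \le d-1$, from $\cE_\cL(U)$ and $\cE_\cL(V)$. Injectivity of $\pi_S$ on $\code$ gives not only $\dim(\pi_S(\cE_\cL(U))) = \dim(\cE_\cL(U))$ and the analogous identity for $V$, but also $\pi_S(\cE_\cL(U)) \cap \pi_S(\cE_\cL(V)) = \pi_S(\cE_\cL(U) \cap \cE_\cL(V))$: the nontrivial inclusion holds because if $\pi_S(a) = \pi_S(b)$ with $a \in \cE_\cL(U)$ and $b \in \cE_\cL(V)$, then $a - b \in \code$ lies in $\ker \pi_S$, forcing $a = b \in \cE_\cL(U) \cap \cE_\cL(V)$. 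Hence the punctured subspaces have exactly the same dimensions and intersection dimension as $\cE_\cL(U)$ and $\cE_\cL(V)$, so their subspace distance equals $\Distance(\cE_\cL(U), \cE_\cL(V)) \ge 2D$ by property 2. (Lemma~\ref{lemma:projection-distance} alone would already yield the inequality $\ge 2D$; injectivity upgrades it to an equality, which is what keeps $\cL$ on the Singleton-type parameters.) Assembling the four properties shows $\cL$ is a $(D,d)$-hybrid code with the stated parameters $[n, \ell, \log_q(|\Code|), \ge 2D, \ge d]$. I do not anticipate a genuine obstacle; the one point requiring care is the passage from ``$\pi_S$ is injective on $\code$'' to the intersection identity used for property 4, since injectivity of a linear map on a space does not in general commute with intersections of arbitrary subspaces -- it does so here only because both subspaces lie inside $\code$.
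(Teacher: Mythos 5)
Your proof is correct and follows essentially the same route as the paper's: both arguments reduce everything to the fact that the GRS code $\code$ has minimum distance $d$, so that deleting at most $d-1$ coordinates is injective on $\code$ (the paper phrases this as ``every nonzero codeword keeps weight at least $1$'' and ``distinct codewords cannot collide after at most $d-1$ erasures,'' which are exactly your injectivity of $\pi_S$ and its consequence for intersections). One small correction: your parenthetical claim that Lemma~\ref{lemma:projection-distance} alone would give $\Distance(U',V') \ge 2D$ is backwards --- that lemma only bounds the projected distance from \emph{above} by $\Distance(U,V)$, so it cannot by itself deliver the lower bound needed for property 4; fortunately your main argument via the intersection identity does not rely on this remark.
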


\begin{proof}
It is straightforward to verify that $\cL$ has the first two parameters stated. 
The third parameter follows from the fact that $|\Code|$ is the number of subspaces
in $\Code$, and two different subspaces are mapped onto different subspaces 
under $\cE_\cL$. 

Next, we show that $\cL$ is a code capable of  correcting $D-1$ dimension gains/losses and $d-1$ symbols erasures.
It suffices to show the following two properties: 

\begin{enumerate}
\item
Let $V \in \cL$. Let $V'$ be the subspace obtained from $V$ by any $\mu$ symbol erasures, such that 
$\mu \le d-1$. Then, $\dim(V') = \ell$ and 
the space $V$ is the only pre-image of $V'$ in $\cL$. 
\item
Let $U, V \in \cL$. Let $U', V'$ be obtained from $U$ and $V$, respectively,
by $\mu$ symbol erasures, such that $\mu \le d-1$. 
Then, $\dim(U') + \dim(V') -2 \dim(U' \cap V') \ge 2 D$.
\end{enumerate}
Indeed, that these two conditions are satisfied can be shown as follows. 
\begin{enumerate}
\item
Let $V \in \cL$ and $V'$ be obtained from $V$ by $\mu$ symbol erasures, such that 
$\mu \le d-1$. Let $\{\bldv_1, \bldv_2, \cdots, \bldv_\ell\}$ be a basis of $V$, and let 
$\{\bldv'_1, \bldv'_2, \cdots, \bldv'_\ell\}$ be a set of corresponding vectors obtained by 
$\mu$ symbol erasures.  
Then, for any $a_1, a_2, \cdots, a_\ell \in \ff_q$, not all of which are zero,  
\[
\bldv = \sum_{i = 1}^\ell a_i \bldv_i \in V  
\]
is a vector of the Hamming weight $\ge d$. Therefore, after applying $\mu$ symbol erasures, 
the Hamming weight of the resulting vector
\[
\bldv' = \sum_{i = 1}^\ell a_i \bldv'_i  
\]
is at least $d-(d-1) \ge 1$, for any $a_1, a_2, \cdots, a_\ell \in \ff_q$, not all of which are zero. 
Therefore, the vectors $\{\bldv'_1, \bldv'_2, \cdots, \bldv'_\ell\}$ are linearly independent, and thus 
$\dim(V') = \ell$. 

Next, take a vector $\bldv' = \sum_{i = 1}^\ell a_i \bldv'_i \in V'$. Since the minimum distance of a code $\code$ is $d$, 
and thus the code can correct any pattern of up to $d-1$ symbol erasures, 
the only possible pre-image of $\bldv'$ under any $\mu$ symbol erasures, $\mu \le d-1$, is 
$\bldv = \sum_{i = 1}^\ell a_i \bldv_i \in V$. 
Therefore, each $V' \in \cL'$ has a unique pre-image. 

\item
Let $U, V \in \cL$ and let $U', V'$ be obtained from $U$ and $V$, respectively,
by $\mu$ symbol erasures, such that 
$\mu \le d-1$.

From part (1) of the proof, $\dim(U') = \dim(V') = \ell$. It is sufficient to show that 
$\dim(U' \cap V') \le \dim(U \cap V)$. Assume, on the contrary, that $\dim(U' \cap V') > \dim(U \cap V)$. 
This means that there exists $\bldu \in U$ and $\bldv \in V$, $\bldu \neq \bldv$, such that 
by applying $\mu$ symbol erasures to these vectors, one obtains resulting vectors $\bldu'$ and $\bldv'$ that are equal. 
Recall, however, that $\bldu, \bldv \in \code$, and therefore $\distance(\bldu, \bldv) \ge d$. 
We hence arrive to a contradiction, and thus $\dim(U') + \dim(V') -2 \dim(U' \cap V') \ge 2 D$.
\end{enumerate}
\end{proof}

The following generalization of Property~1) above holds. 

\begin{corollary}
Let $V$ be a subcode of $\code$ (of any dimension). Let $V'$ be obtained from $V$ by arbitrary $\mu$ symbol erasures, such that 
$\mu \le d-1$. Then, $\dim(V') = \dim(V)$ and 
the space $V$ is the only pre-image of $V'$ in $\cP(\code)$. 
\label{cor:preimage}
\end{corollary}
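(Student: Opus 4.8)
The plan is to run the argument of Property~1) in the preceding theorem, but without the fixed-dimension hypothesis, by working directly with the erasure map on all of $\code$. Let $S = \{ j_1, \ldots, j_\mu \} \subseteq [n]$ be the set of erased coordinates, with $\mu \le d-1$, and write $W_\cL = W_S \bigodot \langle \blde_{j_1}, \ldots, \blde_{j_\mu} \rangle$. Under the identification described after Definition~\ref{def:erasure}, the effect of these $\mu$ symbol erasures on any subspace $X \subseteq \code$ is exactly the projection $X \mapsto X|_{W_S}$; call this linear map $\Psi$. First I would record that $\code$ has minimum Hamming distance exactly $d$, so every nonzero codeword of $\code$ has weight at least $d$.

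The key step is to show that $\Psi$ restricted to $\code$ is injective. A vector in the kernel of $\Psi|_\code$ is a codeword of $\code$ all of whose nonzero coordinates lie in $S$, hence has Hamming weight at most $\mu \le d-1 < d$; by the minimum-distance bound it must be the zero codeword. Therefore $\Psi|_\code \colon \code \to \Psi(\code)$ is a linear isomorphism onto its image.

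Both claims then follow at once. For the dimension statement, injectivity of $\Psi|_\code$ gives $\dim(V') = \dim(\Psi(V)) = \dim(V)$ for every subcode $V \subseteq \code$. For uniqueness, suppose $\tilde V \in \cP(\code)$ is any subcode with $\Psi(\tilde V) = V' = \Psi(V)$; applying the inverse isomorphism $(\Psi|_\code)^{-1}$ to these equal images yields $\tilde V = V$. Equivalently, at the level of individual vectors: each $\bldv' \in V'$ has a \emph{unique} pre-image in $\code$ under the $\mu \le d-1$ erasures (two codewords of $\code$ agreeing on $[n]\setminus S$ differ by a codeword supported on $S$, which is necessarily zero, matching the fact cited earlier that $\code$ corrects up to $d-1$ erasures), so that pre-image lies in every subcode of $\code$ mapping onto $V'$, forcing all such subcodes to coincide.

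The argument is essentially routine; the only point requiring a little care is the clean identification of the symbol-erasure operation with the projection $(\cdot)|_{W_S}$ and the resulting description of $\ker(\Psi|_\code)$ as the codewords supported on the $\mu$ erased positions. Once that is in place, the minimum-distance property of $\code$ does all the work, and I anticipate no genuine obstacle.
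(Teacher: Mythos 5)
Your proof is correct and follows essentially the same route as the paper, which simply notes that the corollary "follows along the same lines" of Property~1): the minimum distance $d$ of the GRS code $\code$ forces any codeword supported on the $\mu \le d-1$ erased positions to be zero. Your packaging of this as the injectivity of the projection $\Psi|_{\code}$ (so that it is an isomorphism onto its image, yielding both the dimension preservation and the uniqueness of the pre-image at once) is a slightly cleaner restatement of the paper's basis-by-basis weight argument, not a genuinely different method.
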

The proof follows along the same lines of the proof of Property~1). 

\subsection{Asymptotic Optimality}

In Section~\ref{sec:bounds} we derived upper bounds on the size of general hybrid codes. 
These bounds imply upper bounds on the size of the code $\cL$. Moreover, the code $\cL$ 
is asymptotically optimal with respect to one of these bounds, as will be shown below. 

Consider the code $\cL$ constructed from a subspace code with 
parameters $[\ell + m, \ell, \log_q|\Code|, 2D]_q$ and a classical GRS code
with parameters $[n, \ell + m, d]_q$, $n = \ell + m + d -1$, as described in the previous section. 
The resulting code $\cL$
is a $[n, \ell + m, \log_q|\Code|, 2D, d]_q$ code. 
The number of codewords of the code is $|\Code|$. If we take $\Code$ as described in~\cite{KK}, with $\ell \le m$, then
the $q$-ary logarithm of the number of the codewords is given by 
\[
\log_q |\Code| = m (\ell - D + 1) \; .
\]
Therefore, the log of the cardinality of $\cL$ is given by 
\begin{equation}
\log_q |\cL| = m (\ell - D + 1) = (n - \ell - d + 1) (\ell - D + 1) \; . 
\label{eq:number-of-words}
\end{equation}
Hence from Lemma~\ref{lemma:gauss-coeff} we have the following result. 
\begin{corollary}
\[
\frac{1}{4} {n - d - D + 2 \brack \ell - D + 1 }_q < |\cL| < { n - d - D + 2 \brack \ell - D + 1 }_q \;. 
\]
Thus, the code $\cL$ is asymptotically order-optimal (i.e., optimal up to a constant factor) with respect to the Singleton bound~(\ref{eq:singleton}).
\label{cor:singleton-optimal} 
\end{corollary}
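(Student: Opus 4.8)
The plan is to combine the exact codeword count from~(\ref{eq:number-of-words}) with the two-sided estimate on Gaussian coefficients in Lemma~\ref{lemma:gauss-coeff}. First I would recall that, by~(\ref{eq:number-of-words}), the code $\cL$ has $|\cL| = q^{(n-\ell-d+1)(\ell-D+1)}$ (this uses the standing assumption $\ell \le m$, which guarantees $\log_q|\Code| = m(\ell-D+1)$ for the code $\Code$ of~\cite{KK}). Then I would introduce the shorthand $N = n-d-D+2$ and $L = \ell-D+1$ and verify the elementary identity $N - L = n-d-\ell+1$, so that $L(N-L) = (\ell-D+1)(n-d-\ell+1) = \log_q|\cL|$. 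In other words, $|\cL| = q^{L(N-L)}$ is precisely the ``central'' power of $q$ attached to the Gaussian coefficient ${N \brack L}_q$.

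Next I would apply Lemma~\ref{lemma:gauss-coeff} with parameters $N$ and $L$. It states $1 < q^{-L(N-L)} {N \brack L}_q < 4$, equivalently $q^{L(N-L)} < {N \brack L}_q < 4\, q^{L(N-L)}$, which rearranges to $\frac{1}{4}{N \brack L}_q < q^{L(N-L)} < {N \brack L}_q$. Substituting $q^{L(N-L)} = |\cL|$ and unfolding the abbreviations yields exactly the displayed chain
\[
\frac{1}{4} {n-d-D+2 \brack \ell-D+1}_q < |\cL| < {n-d-D+2 \brack \ell-D+1}_q \; .
\]

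For the optimality statement I would invoke the Singleton bound~(\ref{eq:singleton}): it bounds the size of \emph{every} $[n,\ell,\log_q(M),2D,d]_q$ code by $\min\{\,\cdot\,,\,\cdot\,\} \le {n-d-D+2 \brack \ell-D+1}_q$, so in particular the largest such code has size at most ${n-d-D+2 \brack \ell-D+1}_q$. Combining this with the lower bound just obtained shows that $|\cL|$ is at least $\tfrac{1}{4}$ of the right-hand side of~(\ref{eq:singleton}), hence at least $\tfrac{1}{4}$ of the size of any competing code with the same parameters. Therefore $\cL$ is optimal up to the multiplicative constant $4$, which is independent of $q$, $n$, $\ell$, $D$, $d$; this is the claimed asymptotic order-optimality.

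I do not expect a genuine obstacle here: the argument is a direct substitution. The one place that needs care is matching the exponent in Lemma~\ref{lemma:gauss-coeff} to the codeword count, i.e.\ checking that $N-L = n-d-\ell+1$ is exactly the factor appearing in~(\ref{eq:number-of-words}) alongside $\ell-D+1$. It is also worth flagging explicitly the hypothesis $\ell \le m$, without which~(\ref{eq:number-of-words}) — and hence the clean identity $|\cL| = q^{L(N-L)}$ — need not hold.
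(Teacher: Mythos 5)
Your proposal is correct and matches the paper's own (very terse) argument exactly: the paper likewise derives $\log_q|\cL| = (\ell-D+1)(n-\ell-d+1)$ from~(\ref{eq:number-of-words}) and then invokes Lemma~\ref{lemma:gauss-coeff} with parameters $N = n-d-D+2$, $L = \ell-D+1$, using $N-L = n-d-\ell+1$. Your explicit flagging of the hypothesis $\ell \le m$ and the comparison against the Singleton bound~(\ref{eq:singleton}) is a faithful spelling-out of what the paper leaves implicit.
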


\subsection{Hybrid Codes versus Subspace Codes}

In this section, we analyze the error-correcting ability of hybrid codes and subspace codes. We show that 
in some scenarios, hybrid codes achieve significant improvement in code rate, while having 
similar error-correcting capability, when compared to their subspace codes counterparts. 

\subsection*{Examples} 

Let $\cK$ be the code defined as in~\cite{KK}. 
When the code $\cK$ is used over a noncoherent network, in the worst case scenario each symbol erasure translates into 
the loss of one dimension, and each symbol error translates into one dimension loss and one erroneous dimension gain. This may 
happen when all the erasures and errors occur in linearly independent vectors. In addition, note that requiring each linearly independent 
vector to be able to correct up to and including $d-1$ erasures is somewhat restrictive, since it imposes an individual, rather than joint constraint,
on the total number of erasures in the transmitted subspace.

We show next the advantage of using the code $\cL$ for the case when all data errors in the noncoherent network take form of symbol erasures. These
symbol erasures are the cause of dimension losses. The code $\cL$ has more codewords than $\cK$ while having the same overall 
error-correcting capability. 

\begin{example}
Consider the code $\cK$ with parameters $[\ell + m, \ell, mk, 2 (\ell - k + 1)] = [6, 3, 3, 6]_q$. This code can correct up to and including two dimension losses. 
If the symbol erasures happen in the linearly independent vectors, the result is a loss of 
two dimensions, and the code can provably correct such two symbol erasures. Alternatively, 
one symbol error results in one dimension loss and one dimension gain, which is also correctable by this code.
However, this code is not able to correct any combination of 
three symbol erasures that occur in different basis vectors. Note that the code contains $q^3$ (subspaces) codewords. 

Now, let $W = \ff_q^4$ and consider the set $\cP = \cP(W, 3)$, where $|\cP| = {4 \brack 3}_q$. 
Fix some basis for $\cP$. Let $\code$ be $[6, 4, 3]_q$ GRS code (for $q \ge 5$). 
Define the mapping $\cE_\cL : W \rightarrow \code$ as in Section~\ref{sec:construct}. 

The resulting code $\cL$ has 
\[
{4 \brack 3}_q = \frac{q^4 - 1}{q-1} 
\]
codewords (subspaces), for all $q \ge 5$. 
It has parameters $[6, 3, \log_q \left( \frac{q^4 - 1}{q-1} \right), \ge 2, 3]_q$.  
Since $\code$ has a minimum distance $3$, $\cL$ can correct any two symbol erasures. If those appear in different
basis vectors, the dimension loss error correcting capability matches that of the previously described subspace code. 
But the number of codewords in the code is strictly larger than that in $\cK$.
\end{example}

The increase in the number of codewords achieved through hybrid coding in the above scenario is negligible for large field orders. 
Furthermore, even these modest advantages are present only in cases when the symbol erasures (or errors) do not appear in bursts within
a few linearly independent vectors.

However, the advantage of the new construction is significantly more pronounced when the gap between $\ell$ and $m$ is large. 
This gap becomes of interest when the length of data transmitted in the network is much higher than the 
dimension of the subspaces used in coding, or in other words, when the min-cut of the network is small compared 
to the length of the coded vectors.  

\begin{example}
Take the code $\cK$ with parameters 
$[\ell + m, \ell, mk, 2 (\ell - k + 1)] = [12, 4, 16, 6]_q$. This code can correct up to and including two dimension 
losses and it contains $q^{16}$ codewords.

For comparison, take $W = \ff_q^{10}$ and consider the set $\cP = \cP(W, 4)$, where $|\cP| = {10 \brack 4}_q$. 
Let $\code$ be a $[12, 10, 3]_q$ GRS code, with $q \ge 11$. 
Define the mapping $\cE_\cL : W \rightarrow \code$ as before. 

The resulting code $\cL$ has parameters $[12, 4, \log_q \left( {10 \brack 4}_q \right), \ge 2, 3]_q$.  
Since $\code$ has a minimum distance $3$, $\cL$ can correct any two symbol erasures.  

The number of codewords in the code equals
\[
{10 \brack 4}_q = \frac{(q^{10} - 1)(q^9 - 1)(q^8 - 1)(q^7 - 1)}{(q^4 - 1)(q^3 - 1)(q^2 - 1)(q-1)} > q^{24}.
\]
This number is strictly larger than $4 q^{16}$ (for all $q \ge 11$),
which is an upper bound on the size of any $[12, 4, 16, 6]_q$ subspace code.  
\end{example}

\subsection*{Comparison of Dimension Losses and Symbol Erasures}

The examples described above motivate the following question: how many symbol erasures should be counted towards one dimension loss for the
case that the subspace and hybrid codes have the same number of codewords?
 
To arrive at the desired conclusion, we use an upper bound on the size of any constant-dimension 
subspace code, which was derived in~\cite{KK}. Therefore, our findings are also valid for the codes constructed in~\cite{KK, Skachek, Etzion-Silberstein}, 
as well as for any other possible subspace code. 

Throughout the section, we use $\hat{\cL}$ to denote an arbitrary $[n, \ell, \log(|\hat{\cL}|), 2 \tilde{D}]$ subspace code.     
Let us fix the values of the parameters $n$ and $\ell$. 
Recall that in the worst case, each symbol erasure can cause one dimension loss. 
We use the Singleton bound on the size of the code $\hat{\cL}$~\cite[Theorem 9]{KK}. Any such code 
is capable of correcting 
$\tilde{D}-1$ dimension losses, so in the worst case scenario, it can provably correct only up to $\tilde{D}-1$ symbol erasures. 
From~\cite[Theorem 9]{KK} we have
\[
|\hat{\cL}| \le {n - \tilde{D} + 1 \brack \ell - \tilde{D} + 1 }_q < 4 q^{(\ell - \tilde{D} + 1)(n - \ell)} \; . 
\]
In comparison, the number of codewords in the $[n, \ell, \log_q(|\cL|), 2D, d]$ code $\cL$ constructed in Section~\ref{sec:construct},
when $\Code$ is taken as in~\cite{KK}, is given by 
\[
|\cL| = q^{(\ell - D + 1)(n - \ell - d + 1)} \; .  
\]
In order to achieve the same erasure-correcting capability, we set $\tilde{D} - 1 = (D - 1) + (d - 1)$. The underlying
assumption is that $D-1$ symbol erasures are corrected as dimensional losses, while the remaining erasures are handled as simple erasures. 
We require that, for small $\epsilon > 0$,  
\[
(\ell - (\tilde{D} - 1))(n - \ell) + \epsilon < (\ell - (D - 1))(n - \ell - (d-1)) \; . 
\]
This is equivalent to 
\[
(\ell - (D - 1) - (d-1))(n - \ell) + \epsilon < (\ell - (D - 1))(n - \ell - (d-1)) \; ,  
\]
or 
\[
- (d-1)(n - \ell) + \epsilon < - (\ell - (D - 1)) (d-1) \; , 
\]
which reduces to 
\begin{equation}
( n - 2 \ell + (D-1)) (d - 1) > \epsilon \; .
\label{eq:comparison}
\end{equation}
The latter inequality holds for any choice of $D \ge 2$ and $d \ge 2$, 
when $n \ge 2\ell + \epsilon'$, for some small $\epsilon' > 0$. 
When the inequality~(\ref{eq:comparison}) is satisfied, hybrid codes correct more 
symbol erasures than any constant-dimension subspace code, designed to correct dimension errors only.  

%\subsection{Parameter optimization}
Next, we consider maximizing the number of codewords in $\cL$ under the constraints that 
$(D - 1) + (d - 1) = \tilde{D} - 1$, and $D \ge 1$, $d \ge 1$, where $\tilde{D}$ is fixed and $D$, $d$ are allowed to vary. 
Recall that
\begin{equation}
\log_q(|\cL|) = (\ell - (D - 1))(n - \ell - (d-1)) \; . 
\label{eq:size-L}
\end{equation}
Let $x \define d - 1$ so that $D - 1 = s - x$, where $s = \tilde{D} - 1$ is a constant. 
We aim at maximizing the function 
\begin{equation}
\label{eq:object-function}
(\ell - s + x)(n - \ell - x) \; . 
\end{equation}
By taking the first derivative of the expression with respect to $x$ and by setting it to zero, we find that $x_{max} = \frac{n + s}{2} - \ell$. 
Therefore, the value of $d$ that maximizes the number of codewords equals
\[
d_{opt} = \frac{n + \tilde{D} + 1}{2} - \ell .  
\]
If $n \ge 2 \ell + \tilde{D} - 1$, then under the given constraints, 
the optimal value of $d$ equals $d_{opt} = \tilde{D}$, i.e. it is better to put all 
error-correcting capability on symbol erasure correction.  
\medskip

%%%%%%%%%%%%%%%%%%%%%%%%%%%%%%%%%%%%%%%%%%%%%%%%%%%%%%%%%
%\subsubsection{Comparison of two types of erasures}

Consider the expression for the number of codewords in~(\ref{eq:size-L}). 
There are two types of subspace and symbol errors considered: dimension losses and symbol erasures. A 
combination of such errors is subsequently termed \emph{an error pattern}.

Assume that for a specific code $\cL$, correcting a dimension loss 
is on average equivalent to correcting $\csf$ symbol erasures, for some $\csf > 0$. 
We consider an optimal selection procedure for the parameters of $\cL$ for two different error patterns. 

If the error pattern consists of no dimension losses and $d-1$ symbol erasures, 
then~(\ref{eq:size-L}) becomes $\ell(n - \ell - (d-1))$. In comparison, 
if the error pattern consists of $D-1$ dimension losses and no symbol erasures, 
then~(\ref{eq:size-L}) becomes $(\ell - (D-1))(n - \ell)$. Since each 
dimension loss is on average equivalent to $\csf$ symbol erasures, we have 
\[
\csf \cdot (D-1) = d-1 \; , 
\]
and, so,
\[
(\ell - (d -1) / \csf )(n - \ell) = \ell(n - \ell - (d-1)) \; . 
\]
After applying some simple algebra, we obtain that
\[
\csf = (n - \ell)/\ell \; .
\]
Therefore, vaguely speaking, it is as hard to correct 
one dimension loss as it is to correct $(n - \ell)/\ell$ symbol erasures. 

\subsection{Hybrid Codes versus Coding Vectors}

We next compare hybrid codes with coding schemes used in randomized network coding~\cite{Ho}. 
In the latter approach, the information is presented by vectors rather than vector spaces. 
Special headers are appended to the 
vectors in order to keep track of which linear combinations were applied to the original packets. 
As a result, the coding vectors have an overhead of at least $\ell \cdot \log_2 (| \ff_q|)$ bits, where $\ell$ is the 
number of transmitted packets. This overhead becomes of less significance if the length of the data 
payload is large. 

When ``lifted'' subspace codes, such as~\cite{SKK}, 
are employed over the regular operator channel, 
the vector subspace is usually transmitted by using its basis vectors. Each such basis 
vector has a unique leading bit having value ``$1$''. The collection of such bits can be viewed as 
a header, used in coding vectors. When linear combinations are received at the destination, 
the original vectors can be recovered by applying the inverse linear transformation. Therefore, 
the coding schemes based on codes in~\cite{SKK} can also be viewed as a scheme based on coding vectors. 
Since the codes in~\cite{SKK} are asymptotically optimal with respect to the Singleton bound~\cite[Theorem 9]{KK}, 
we conclude that schemes based on coding vectors are asymptotically optimal. 
(See a related discussion in~\cite{Jafari}.)

Next, consider a hybrid code $\cL$ constructed as in Section~\ref{sec:construct}, where the matrix $\bldG$ is a \emph{systematic} generating matrix of a GRS code. 
Then, the transmitted basis vectors are the rows of $\bldG$, and so each vector has a unique leading bit ``$1$''. Therefore, the corresponding 
coding scheme may be viewed as a coding vector scheme with headers of size $(\ell+m) \cdot \log_2(|\ff_q|)$. By Corollary~\ref{cor:singleton-optimal}, 
this scheme is asymptotically optimal with respect to the corresponding
Singleton bound. Hence, asymptotically optimal hybrid subspace codes have a performance comparable to that of asymptotically optimal coding vector schemes. 

The advantages and drawbacks of the schemes based on coding vectors and on the subspace codes were discussed in~\cite{Jafari}. The analysis carried out there extends to hybrid codes, as they may be constructed around subspace codes. 

\section{Correcting Dimension Errors and Symbol Erasures} 
\label{sec:decoding-1} 

We proceed to present an efficient decoding procedure which handles  
dimension losses, dimension gains and symbol erasures. Note that the proposed 
decoding method may fail in the case that symbol errors are also present. This issue is discussed in more
detail in Section~\ref{sec:decoding-failure}.

As before, assume that $V \in \cL$ is transmitted over a noncoherent network.
Assume also that $U \in \cP(W_\cL, \ell')$, where $\ell'$ is not necessarily equal to $\ell$, was received. 

Let $U' \subseteq \ff_q^{n - \mu}$ denote the vector space $U \subseteq (\ff_q \cup \{ ? \})^{n}$, where all 
$\mu$ erased coordinates are deleted, $0 \le \mu \le d-1$. Similarly, let $\code'$ 
denote the code $\code$ where all $\mu$ coordinates erased in $U$ are deleted.  
We first compute $\tilde{U'} = \code' \cap U'$, the intersection of $U'$ with the subspace $\code'$. 
Assume that $\{ \bldgamma'_1, \bldgamma'_2, \cdots, \bldgamma'_{\ell''} \}$
are the basis vectors of $\tilde{U'}$ when the erased coordinates are marked as $?$, 
and so $\bldgamma'_i \in (\ff_q \cup \{ ? \})^{n}$. We apply the erasure-correcting GRS decoder $\decoder_{RS}$ of the code 
$\code$ to each $\bldgamma'_i$ so as to obtain $\bldgamma_i$. 
Let $\tilde{U} = \langle \bldgamma_1, \bldgamma_2, \cdots, \bldgamma_{\ell''} \rangle$.
We proceed by applying the inverse of the mapping $\cE_\cL$, denoted by $\cE_\cL^{-1}$, to $\tilde{U}$. 
The resulting subspace $\tilde{V}$ is a subspace of $W$, on which we now run the decoder for the code $\Code$. 

The algorithm described above is summarized in Figure~\ref{fig:alg-dimensions}. 

\begin{figure}[ht]
\makebox[0in]{}\hrulefill\makebox[0in]{}
\begin{description}
\item[\bf Input:] $\,$  $U \subseteq (\ff_q \cup \{ ? \})^{n}$. 
%\newlength{\Initlabel}
\settowidth{\Initlabel}{\textbf{Input:}}
\item[\bf Let] $U'$ be the space $U$, where all $\mu$ erased coordinates are deleted. 
\item[\bf Let] $\code'$ be the code $\code$, where all $\mu$ coordinates erased in $U$ are deleted.  
\item[\bf Let] $\tilde{U'} = \code' \cap U'$. 
\item[\bf Denote]  $\;\;$ $\tilde{U'} = \langle \bldgamma'_1, \bldgamma'_2, \cdots, \bldgamma'_{\ell''} \rangle$. 
\item[\bf For] $i = 1, 2 , \cdots, \ell''$ {\bf let} $\bldgamma_i = \cD_{RS} ({\bldgamma'}_i)$ \; . 
\item[\bf Let] $\tilde{U} = \langle \bldgamma_1, \bldgamma_2, \cdots, \bldgamma_{\ell''} \rangle$. 
\item[\bf Let] $\tilde{V} = \cE_\cL^{-1}( \tilde{U} )$.
\item[\bf Let] $V_0 = \cD_{\Code}( \tilde{V} )$.
\item[\bf Output:]  $\;\;$ $V_0$.
\end{description}
\makebox[0in]{}\hrulefill\makebox[0in]{}
\caption{Decoder for dimension errors.}
\label{fig:alg-dimensions}
\end{figure}	

This decoder can correct any combination of $\Theta$ dimension losses and $\Omega$ dimension gains such that $\Theta + \Omega \le D-1$, 
and at most $d-1$ symbol erasures. This is proved in the following theorem. 
%\subsubsection*{Analysis of the Decoding Algorithm} 
%
%We claim that the decoder in Figure~\ref{fig:alg-dimensions} can correct any error pattern that consist 
%of $\Theta$ dimension losses and $\Omega$ dimension gains, $\Theta + \Omega \le D-1$, 
%and of $\mu$ symbol erasures, $\mu \le d-1$. This result is formulated in the following theorem.

\begin{theorem} 
The decoder in Figure~\ref{fig:alg-dimensions} can correct any error pattern of up to $D-1$ dimension errors
and up to $d-1$ symbol erasures in $\cL$. 
\end{theorem}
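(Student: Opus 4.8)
\emph{Proof proposal.} The plan is to follow a transmitted codeword line by line through the algorithm of Figure~\ref{fig:alg-dimensions} and to show that the subspace $\tilde V$ fed to $\cD_{\Code}$ lies within subspace distance $D-1$ of the transmitted message, so that the bounded-distance decoder $\cD_{\Code}$ returns it correctly.

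First I would fix the channel decomposition exactly as in the preceding theorem. Let $V\in\cL$ be transmitted; write $V=\cE_\cL(V_0)$ for the unique $V_0\in\Code$, and note $V\subseteq\code$. Invoking the standing assumption that dimension errors precede symbol erasures, let $S=\{j_1,\dots,j_\mu\}$, $\mu\le d-1$, be the erased coordinates, $W_\cL=W_S\bigodot\langle\blde_{j_1},\dots,\blde_{j_\mu}\rangle$, and write $U_1=\cH_k(V)\oplus E$, $U=U_1|_{W_S}$, with $\dim(V\cap U_1)=k$, $\dim(U_1)=\ell'$, and $(\ell-k)+(\ell'-k)\le D-1$. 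Then the algorithm's $U'$ is exactly $U_1|_{W_S}$ and its $\code'$ is exactly $\code|_{W_S}$.

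The core step is to control $\tilde U=\langle\bldgamma_1,\dots,\bldgamma_{\ell''}\rangle$ via two observations. (i) \emph{Dimension bound:} since $\tilde{U'}=\code'\cap U'\subseteq U'=U_1|_{W_S}$ we have $\ell''=\dim(\tilde{U'})\le\dim(U_1)=\ell'$; because $\code$ has minimum distance $d>\mu$, each vector of $\tilde{U'}\subseteq\code'$ has a unique preimage in $\code$, this preimage map is linear and is precisely what $\cD_{RS}$ computes on each basis vector, so (by Corollary~\ref{cor:preimage}) $\tilde U\subseteq\code$, $\tilde U|_{W_S}=\tilde{U'}$, and $\dim(\tilde U)=\ell''\le\ell'$. (ii) \emph{Containment:} from $\cH_k(V)\subseteq V\subseteq\code$ one gets $\cH_k(V)|_{W_S}\subseteq\code'$, and from $\cH_k(V)\subseteq U_1$ one gets $\cH_k(V)|_{W_S}\subseteq U'$, hence $\cH_k(V)|_{W_S}\subseteq\tilde{U'}$; applying uniqueness of preimages vector by vector (each $\bldx\in\cH_k(V)\subseteq\code$ is the unique element of $\code$ projecting to $\bldx|_{W_S}\in\tilde U|_{W_S}$, and that element lies in $\tilde U$) yields $\cH_k(V)\subseteq\tilde U$.

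Combining (i) and (ii), $V\cap\tilde U\supseteq\cH_k(V)$ has dimension $\ge k$, so
\[
\Distance(V,\tilde U)=\dim V+\dim\tilde U-2\dim(V\cap\tilde U)\le\ell+\ell'-2k\le D-1 .
\]
Since $\cE_\cL$ is a bijection between $\cP(W)$ and $\cP(\code)$ preserving dimension and intersection dimension, hence $\Distance$ (equations~(\ref{eq:mapping-1})--(\ref{eq:mapping-2})), the space $\tilde V=\cE_\cL^{-1}(\tilde U)$ satisfies $\Distance(V_0,\tilde V)\le D-1<D$; as $\Code$ has minimum subspace distance $2D$, the bounded-distance decoder $\cD_{\Code}$ recovers $V_0$, i.e. the transmitted message. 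I expect the only delicate point to be steps (i)--(ii): one must argue that intersecting with $\code'$ and then GRS-erasure-decoding coordinatewise produces a subspace of $\code$ that still contains the surviving part $\cH_k(V)$ of the message yet has not grown beyond dimension $\ell'$ — both facts hinge on $\mu\le d-1$ together with the uniqueness statement of Corollary~\ref{cor:preimage}; everything else is the distance inequality supplied by the channel model and the distance-preservation of $\cE_\cL$.
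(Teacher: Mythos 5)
Your proof is correct, and it reaches the same final inequality $\Distance(V,\tilde U)\le D-1$ by a noticeably different middle route. The paper works entirely in the projected space: it sets $V'=V|_{W_S}$, invokes Lemma~\ref{lemma:projection-distance} to get $\Distance(V',U')\le\Distance(V,U_1)\le D-1$, then uses the identity $\dim(V'\cap\tilde U')=\dim(V'\cap U')$ (valid because $V'\subseteq\code'$ and $\tilde U'=\code'\cap U'$) together with $\dim\tilde U'\le\dim U'$ to conclude $\Distance(V',\tilde U')\le\Distance(V',U')$, and finally lifts the distance back up via Corollary~\ref{cor:preimage}. You instead bypass Lemma~\ref{lemma:projection-distance} altogether and argue directly in $\ff_q^n$: the two facts $\cH_k(V)|_{W_S}\subseteq\code'\cap U'$ and uniqueness of erasure preimages give $\cH_k(V)\subseteq\tilde U$, while $\tilde U'\subseteq U'$ gives $\dim\tilde U\le\ell'$, whence $\Distance(V,\tilde U)=(\dim V-\dim(V\cap\tilde U))+(\dim\tilde U-\dim(V\cap\tilde U))\le(\ell-k)+(\ell'-k)\le D-1$. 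Both arguments lean on the same uniqueness statement (Corollary~\ref{cor:preimage}, i.e.\ $\mu\le d-1$ and the minimum distance of $\code$); the paper's version is more modular in that it reuses the projection lemma as a black box, whereas yours is more concrete and makes explicit which part of the transmitted subspace, namely $\cH_k(V)$, survives the intersection with $\code'$ and the GRS erasure decoding. One small presentational caution: the inequality $\dim\tilde U-2\dim(V\cap\tilde U)\le\ell'-2k$ combines an upper bound on $\dim\tilde U$ with a lower bound on $\dim(V\cap\tilde U)$; it is cleanest to split the distance into the two nonnegative terms $\dim V-\dim(V\cap\tilde U)$ and $\dim\tilde U-\dim(V\cap\tilde U)$ and bound each by $\ell-k$ and $\ell'-k$ respectively, as the signs then take care of themselves.
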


\begin{proof}
Suppose that $V \in \cL$ is transmitted through the operator channel and that
$U \subseteq (\ff_q \cup \{ ? \})^{n}$ is received, 
where $\mu$ symbols erasures and $\Theta + \Omega$
dimension errors have occurred, such that $\mu \le d-1$ and $\Theta + \Omega \le D-1$. 

As before, we assume that dimension errors have occurred first, followed by symbol erasures. More specifically, 
let $S = \{ j_1, j_2, \cdots, j_{\mu} \} \subseteq [n]$ be the set of erased coordinates in $U$, and let $W_\cL = W_S \bigodot \langle \blde_{j_1}, \blde_{j_2}, \cdots, \blde_{j_{\mu}} \rangle$. 
Then, 
\[
U_1 = \cH_k(V) \oplus E  \qquad \mbox{ and } \qquad U' = U_1 |_{W_S} \; ,
\]
where $\dim(V \cap U_1) = k$, $\dim(U_1) = \ell'$, and $(\ell - k) + (\ell' - k) = \Theta + \Omega$. 
In other words, first $U_1$ is obtained from $V$ by applying only dimension errors (losses and gains). 
Then, $U$ is obtained from $U_1$ by erasing coordinates in $S$. 
Note that here we assume that vectors in $U$ contain entries marked with `?', and so $U \subseteq (\ff_q \cup \{ ? \})^{n}$. By contrast, $U' \subseteq \ff_q^{n-\mu}$ is obtained by removing those erased entries from all vectors in $U$ (or, equivalently, from vectors in $U_1$). 

\begin{comment}
Indeed, consider the 
following set of subspaces 
\[
\cL' = \{ V|_{W_S} \subseteq \ff_q^{\mu} \; : \; V \in \cL \} \; .  
\] 
Take any $V_1, V_2 \in \cL'$. Then, $\dim(V_1) = \dim(V_2) = \ell$, and $\Distance(V_1, V_2) \ge 2D$. 
\end{comment} 

Denote $V' = V|_{W_S} \subseteq \code'$. Then, by Lemma~\ref{lemma:projection-distance}, 
\[
\Distance(V', U') \le \Distance(V, U_1) \le D - 1 \;  . 
\]

We have $\dim(V) = \dim(V')$, $\dim(\tilde{U}') \le \dim(U')$. 
Recall that $\tilde{U}' = \code' \cap U'$. Therefore, since $V' \subseteq \code'$, 
we have
\begin{multline*}
\dim(V' \cap \tilde{U'}) = \dim((U' \cap \code') \cap V') \\
= \dim(U' \cap (\code' \cap V')) = \dim(U' \cap V') \; . 
\end{multline*}
We consequently obtain
\begin{eqnarray*}
\Distance(V',\tilde{U}') & = & \dim(V') + \dim(\tilde{U}') - 2 \dim(\tilde{U}' \cap V') \\
& \le & \dim(V') + \dim(U') - 2 \dim({U}' \cap V') \\
& = & \Distance(V',U') \\ 
& \le & D  - 1 \; . 
\end{eqnarray*}

Observe that from Corollary~\ref{cor:preimage}, it follows that
$\dim(V) = \dim(V')$ and $\dim(\tilde{U}) = \dim(\tilde{U}')$.
Moreover, $\tilde{U}' \cap V'$ can be obtained from $\tilde{U} \cap V$
by $\mu$ symbol erasures. Then, according to Corollary~\ref{cor:preimage}, 
$\dim(\tilde{U}' \cap V') = \dim(\tilde{U} \cap V)$. 
We conclude that $\Distance(V, \tilde{U}) = \Distance(V',\tilde{U}') \le D - 1$.  

Finally, due to~(\ref{eq:mapping-1}) and~(\ref{eq:mapping-2}), we have that 
$\Distance(\cE_\cL^{-1}( V ), \cE_\cL^{-1}( \tilde{U} )) \le D - 1$. 
Therefore, the decoder $\decoder_\Code$ for the code $\Code$ is able to recover $\cE_\cL^{-1}( V )$, as claimed. 
\end{proof}
%\subsubsection*{Decoding time complexity}
\medskip

We now turn to estimating the time complexity of hybrid decoding. 
The algorithms in Figure~\ref{fig:alg-dimensions} consists of the following main steps: 
\begin{itemize}
\item
\textbf{Computation of the vector space $\tilde{U'} = \code' \cap U'$.}\\
Observe that $\dim(\code') = \ell + m$ and $\dim(U') \le \ell'$, where $\ell' \le \ell + D \le 2 \ell$, since otherwise the decoder
cannot correct $D$ dimension errors. 
Therefore, this computation can be done by solving 
a system of at most $n$ equations over $\ff_q$ with $\ell + m + \ell'$ unknowns.   
By using Gaussian eliminations, this can be done in time $O( (\ell + m + \ell') n^2) = O( (\ell + m) n^2)$. 
\item 
\textbf{$\ell''$ applications of the decoder $\decoder_{RS} ( \cdot )$}.\\
Note that $\ell'' \le \ell' \le 2 \ell$. This requires 
$O(\ell n \log n)$ operations over $\ff_q$.
\item
\textbf{One application of the mapping $\tilde{V} = \cE_\cL^{-1}( \tilde{U} )$.}\\ 
As before, this step is equivalent to multiplying an $\ell'' \times n$ matrix representing the basis of $\tilde{U}$ by an $n \times (\ell + m)$ 
transformation matrix representing the mapping $\cE_\cL^{-1}( \cdot )$. 
This step requires $O\left( \ell'' (\ell + m) n \right) = O\left( \ell (\ell + m) n \right)$
operations over $\ff_q$.
\item
\textbf{One application of the decoder $\decoder_\Code( \cdot)$.} This takes $O( D (\ell + m)^3)$ operations over $\ff_q$ (see~\cite[Chapter 5]{Khaleghi}). 
\end{itemize}
By summing up all the quantities we arrive at an expression for the total complexity of the presented algorithm of the form
\begin{multline*}
O \left( (\ell+m) n^2 + \ell n \log n + \ell (\ell + m) n + D (\ell + m)^3 \right) \\
\le O \left( (\ell + m) n^2 + D (\ell + m)^3  \right) 
\end{multline*}
operations over $\ff_q$.

We note that the most time-consuming step in the decoding process for various choices of the parameters is decoding of a constant-dimension subspace code,
which requires $O( D (\ell + m)^3)$ operations over $\ff_q$. However, if the error pattern in a specific network 
contains a large number of symbol erasures, we can design hybrid codes such that $D$ is fairly small (say, some 
small constant). This reduces the complexity of the overall decoder, which represents another advantage of hybrid codes over classical subspace codes. 

\section{Correcting Dimensions Losses and Symbol Errors}
\label{sec:decoding-2}

We describe next how to use the code $\cL$ defined in Section~\ref{sec:construct}
for correction of error patterns that consist of dimension losses, symbol erasures and symbol substitutions. 
More specifically, we show that the code $\cL$ is capable of correcting any error pattern of up to $\Theta$ dimension losses,  
$\rho$ symbol errors and $\mu$ symbol erasures, whenever $\Theta \le D-1$ and $2 \rho + \mu \le d-1$. 
However, we note that if, in addition to dimension losses, one also encounters dimension gains, 
the decoder for the code $\cL$ may fail. This issue is elaborated on in Section~\ref{sec:decoding-failure}.

\subsection{Decoding}

Henceforth, we assume that $V \in \cL$ is transmitted  over a noncoherent network and 
that $U \subseteq \ff_q^{n-\mu}$ of dimension $\ell'$, where $\ell'$ is not necessarily equal to $\ell$, 
is received. 

Suppose that $\{ \bldgamma_1, \bldgamma_2, \cdots, \bldgamma_{\ell'} \}$, $\bldgamma_i \in \ff_q^{n-\mu}$, 
are the basis vectors in $U$. We can also view the vectors $\bldgamma_i$ as vectors in $(\ff_q \cup \{ ? \})^{n}$. 
We apply the GRS decoder $\decoder_{RS}$ for the code $\code$ on all these vectors.
This decoder produces the vectors $\{ \bldbeta_1, \bldbeta_2, \cdots, \bldbeta_{\ell'} \} \in \code$. 
We denote by $\tilde{U}$ the span of these vectors. Then, we apply the inverse of the mapping $\cE_\cL$, 
denoted by $\cE_\cL^{-1}$, to $\tilde{U}$. The resulting subspace is a subspace of $W$, on which the decoder
for the code $\Code$ is applied. 

The decoding algorithm is summarized in Figure~\ref{fig:alg-general}.
 
\begin{figure}[ht]
\makebox[0in]{}\hrulefill\makebox[0in]{}
\begin{description}
\item[\bf Input:] $\,$  $U = \langle \bldgamma_1, \bldgamma_2, \cdots, \bldgamma_{\ell'} \rangle$, $\bldgamma_i \in (\ff_q \cup \{ ? \})^{n}$. 
%\newlength{\Initlabel}
\settowidth{\Initlabel}{\textbf{Input:}}
\item[\bf For] $i = 1, 2 , \cdots, \ell'$ {\bf let} $\bldbeta_i = \cD_{RS} ({\bldgamma}_i)$ \; . 
\item[\bf Let] $\tilde{U} = \langle \bldbeta_1, \bldbeta_2, \cdots, \bldbeta_{\ell'} \rangle$.
\item[\bf Let] $\tilde{V} = \cE_\cL^{-1}( \tilde{U} )$.
\item[\bf Let] $V_0 = \cD_{\Code}( \tilde{V} )$.
\item[\bf Output:] $\;\;$ $V_0$.
\end{description}
\makebox[0in]{}\hrulefill\makebox[0in]{}
\caption{Decoder for symbol errors.}
\label{fig:alg-general}
\end{figure}	

\subsubsection*{Analysis of the Decoding Algorithm} 

We analyze next the algorithm in Figure~\ref{fig:alg-general}. 
The main result of this section is the following theorem. 

\begin{theorem} 
The decoder in Figure~\ref{fig:alg-general} can correct any error 
pattern in $\cL$ which consists of $\Theta$ dimension losses, $\rho$ symbol errors and $\mu$ symbol erasures, whenever $\Theta \le D-1$ and $2 \rho + \mu \le d-1$. 
\end{theorem}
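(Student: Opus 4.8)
The plan is to trace the received subspace $U$ through the three stages of the decoder (GRS decoding of each basis vector, the inverse lift $\cE_\cL^{-1}$, and the subspace decoder $\cD_\Code$) and show that at each stage the relevant distance is controlled. As before, I would invoke the standing assumption that dimension losses occur first, followed by symbol errors and erasures; since only dimension \emph{losses} are present here (no gains), the intermediate subspace $U_1$ obtained from $V$ by the dimensional part satisfies $U_1 = \cH_k(V)$, i.e.\ $U_1 \subseteq V$, with $\dim(V \cap U_1) = \dim(U_1) = k$ and $\ell - k = \Theta \le D-1$. Then $U$ (viewed in $(\ff_q \cup \{?\})^n$) is obtained from $U_1$ by $\rho$ symbol errors and $\mu$ symbol erasures applied coordinatewise across all basis vectors of $U_1$, as explained in the motivation section; crucially this means each basis vector $\bldgamma_i$ of $U$ is a word that differs from some word of $U_1 \subseteq V \subseteq \code$ in at most $\rho$ (error) coordinates and is erased in at most $\mu$ coordinates, so $2\rho + \mu \le d-1$ lets the GRS decoder $\cD_{RS}$ recover the correct codeword of $\code$.

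\textbf{First} I would argue that after Step~1, $\tilde U = \langle \bldbeta_1, \dots, \bldbeta_{\ell'}\rangle$ equals (or contains as a subspace of the same dimension) the space $U_1$ itself. Each $\bldbeta_i = \cD_{RS}(\bldgamma_i)$ is the unique codeword of $\code$ within distance $d-1$ (in the erasure-error sense) of $\bldgamma_i$; since $\bldgamma_i$ arose from a vector $\bldb_i \in U_1 \subseteq \code$ by exactly the $\rho$ errors and $\mu$ erasures, and $\bldb_i$ is within that decoding radius, we get $\bldbeta_i = \bldb_i$. Here I should be a little careful: the basis vectors of $U$ are $\ff_q$-vectors that span $U$, and I need that \emph{their} pre-images under the error/erasure maps lie in $U_1$; this follows from the coordinatewise nature of the channel (the same coordinate set is corrupted in every vector) together with Corollary~\ref{cor:preimage} applied to $U_1$ as a subcode of $\code$ --- erasures don't change the dimension and the pre-image is unique, and the errors occur in a fixed coordinate set so the recovered vectors again span a subspace mapping bijectively onto $U_1$. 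Hence $\tilde U = U_1$, and in particular $\tilde U \subseteq V$ with $\dim(\tilde U) = k$.

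\textbf{Then} the rest is immediate. Since $\tilde U = U_1 \subseteq V$ we have $\Distance(V, \tilde U) = \dim(V) - \dim(\tilde U) = \ell - k = \Theta \le D-1$. Applying $\cE_\cL^{-1}$ and using the distance-preservation identities~(\ref{eq:mapping-1}) and~(\ref{eq:mapping-2}), $\Distance(\cE_\cL^{-1}(V), \cE_\cL^{-1}(\tilde U)) = \Distance(V,\tilde U) \le D-1$. Since $\cE_\cL^{-1}(V) \in \Code$ and $\Code$ has minimum subspace distance $2D$, the subspace decoder $\cD_\Code$ on input $\tilde V = \cE_\cL^{-1}(\tilde U)$ returns $\cE_\cL^{-1}(V)$, so $V_0 = \cD_\Code(\tilde V)$ corresponds to the transmitted $V$, as claimed.

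\textbf{The main obstacle} I expect is Step~1: making rigorous the claim that GRS-decoding the received basis vectors recovers a basis of $U_1$ rather than merely recovering \emph{some} codewords of $\code$. The subtlety is that the channel corrupts coordinates, not subspaces, so the "pre-image" of $U$ under the combined error/erasure map need not be unique as a subspace unless one exploits that all basis vectors share the same corrupted-coordinate pattern. I would handle this exactly as in Section~\ref{sec:construct} and the proof of the decoder in Figure~\ref{fig:alg-dimensions}: fix the coordinate set $S$ of the $\mu$ erasures and the set of $\rho$ error positions, observe that on the complement these are honest linear combinations of $V$'s basis, deduce linear independence of the $\bldbeta_i$ from the fact that every nonzero codeword of $\code$ has weight $\ge d > 2\rho + \mu$ (so it survives the corruption in the non-$S$ coordinates), and then invoke Corollary~\ref{cor:preimage} to pin down $\dim \tilde U = k$ and $\tilde U = U_1$. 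Everything after that is the routine chaining of Lemma~\ref{lemma:projection-distance}-type bounds and the two mapping identities.
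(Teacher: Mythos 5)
Your proposal is correct and follows essentially the same route as the paper's proof: GRS-decode each received basis vector back to a codeword of $\code$ that is a linear combination of the $\bldv_j$ (hence lies in $V$), conclude $\tilde U \subseteq V$ with $\Distance(V,\tilde U)\le\Theta\le D-1$, and finish via the identities~(\ref{eq:mapping-1})--(\ref{eq:mapping-2}) and the decoder $\cD_\Code$. The only difference is cosmetic: you pin down $\tilde U = U_1 = \cH_k(V)$ exactly via Corollary~\ref{cor:preimage}, whereas the paper contents itself with $\tilde U\subseteq V$ and $\dim(V)-\dim(\tilde U)\le\Theta$, which is all that is needed.
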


\begin{proof}
Suppose that $V = \langle \bldv_1, \bldv_2, \cdots, \bldv_\ell \rangle \in \cL$ 
is transmitted through an operator channel, and that
$U = \langle \bldgamma_1, \bldgamma_2, \cdots, \bldgamma_{\ell'} \rangle$, 
$\bldgamma_i \in (\ff_q \cup \{ ? \})^{n}$, is obtained by the receiver. Assume that 
$\Theta$ dimension losses, $\rho$ symbol errors and $\mu$ symbol erasures have occurred. 

Let $\bldgamma_i$ be an arbitrary received vector, $\bldgamma_i \in U$. Then, $\bldgamma_i$ can be obtained from a unique 
vector $\tilde{\bldgamma}_i \in \ff_q^n$ by at most $\rho$ symbol errors and 
at most $\mu$ symbol erasures, where  
\[
\tilde{\bldgamma}_i = \sum_{j=1}^\ell a_j \bldv_j \; , 
\] 
and $a_j \in \ff_q$, $j = 1, 2, \cdots, \ell$. 
Since $\tilde{\bldgamma}_i$ is a linear 
combination of vectors in $V$, it follows that $\tilde{\bldgamma}_i \in \code$. 
Therefore, the decoder $\decoder_{RS}$ is able to recover $\tilde{\bldgamma}_i$ from $\bldgamma_i$. 
By using the structure of the algorithm, we conclude that $\bldbeta_i = \tilde{\bldgamma}_i$, and so 
$\tilde{U} = \langle \bldbeta_1, \bldbeta_2, \cdots, \bldbeta_{\ell'} \rangle$
is a subspace of $V$. 

Since $\Theta$ dimension losses occurred, $\Theta \le D-1$, 
$\dim(V) - \dim(\tilde{U}) \le \Theta$ and $\Distance(V, \tilde{U}) \le D - 1$.  
Due to~(\ref{eq:mapping-1}) and~(\ref{eq:mapping-2}), we have that 
$\Distance(\cE_\cL^{-1}( V ), \cE_\cL^{-1}( \tilde{U} )) \le D - 1$. 
Therefore, the decoder $\decoder_\Code$ is able to recover $\cE_\cL^{-1}( V )$ from $\cE_\cL^{-1}( \tilde{U} )$, 
as claimed. 
\end{proof}

\subsubsection*{Decoding Time Complexity}

The decoding algorithm consists of the following computational steps: 
\begin{itemize}
\item
\textbf{$\ell'$ applications of a Reed-Solomon decoder, for codes of length $n = \ell + m +d -1$.}\\ 
By using Berlekamp-Massey type decoders, each decoding round can be performed with $O\left( n \log n \right)$ 
operations over $\ff_q$. Thus, this step has a total complexity of $O\left( \ell' n \log n) \right)$. 
\item
\textbf{One application of the mapping $\tilde{V} = \cE_\cL^{-1}( \tilde{U} )$.}\\
First, we have to find a basis for $\tilde{U}$. Gaussian elimination requires 
$O(\ell'^2 n)$ operations over $\ff_q$. 
The mapping $\cE_\cL^{-1}( \tilde{U} )$ is equivalent to multiplying an $\ell' \times n$ matrix representing the basis of $\tilde{U}$ by an 
$n \times (\ell + m)$ transformation matrix representing the mapping $\cE_\cL^{-1}( \cdot )$. The computation of this transformation 
matrix is done only once in the preprocessing step, and so we may assume that this matrix is known. 
We hence conclude that this step takes $O\left( \ell' (\ell + m) n + \ell'^2 n \right)$
operations over $\ff_q$.
\item
\textbf{One application of the decoder $\decoder_\Code( \cdot)$.}\\ 
This takes $O( D (\ell + m)^3)$ operations 
over $\ff_q$ (see~\cite[Chapter 5]{Khaleghi}). 
\end{itemize}
The total complexity of the presented algorithm equals
\begin{multline*}
O \left( \ell' n \log n + \ell' (\ell + m) n + \ell'^2 n + D (\ell + m)^3 \right) \\ 
\le O \left( D n^3 + \ell' n^2 + \ell'^2 n \right) 
\; 
\end{multline*}
operations over $\ff_q$.

The number of operations depends on the dimension of the received subspace, $\ell'$. It would hence be desirable to derive an upper bound on $\ell'$. 
However, since each linearly independent vector can carry a different pattern of symbol errors, the resulting dimension of $U$, $\ell'$, may be rather large. 
However, if we assume that each link to the receiver carries only one vector, $\ell'$ can be bounded from above 
by the in-degree of the receiver. Note that the same issue arises in the context of classical subspace coding, 
although it was not previously addressed in the literature.

\section{Four Types of Errors and Decoding Failure}
\label{sec:decoding-failure}

\subsection{Failure Example}

The following example illustrates that the decoder in Figure~\ref{fig:alg-general} may fail in the presence of \emph{both symbol errors and dimension gains}.  

Let $\{ \blde_1, \blde_2, \cdots, \blde_6 \} \subseteq \ff_q^6$, $q \ge 8$, be a standard basis. Let $\ell = 3$,  
and let $\Code \subseteq \cP(\ff_q^6)$ be a subspace code with $2 D  = 6$. The code $\Code$ is able 
to correct up to and including two dimension losses and/or gains. 
Additionally, let 
$\{ \bldu_1, \bldu_2, \cdots, \bldu_6 \} \in \ff_q^8$ be a basis of a $[8, 6, 3]_q$ GRS code $\code$. 
The code $\code$ is able to correct one symbol error. Assume, without loss of generality, that 
$\bldu_5 = (x_1, x_2, x_3, 0, \cdots, 0) \in \code$ is a codeword of a minimal weight in $\code$.  
 
Assume that the sender wants to transmit the space $Z = \langle \blde_1, \blde_2, \blde_3 \rangle$ 
to the receiver. According to the algorithm, the sender encodes this space as 
$V = \cE_\cL(Z) = \langle \bldu_1, \bldu_2, \bldu_3 \rangle$, and sends the vectors $\bldu_1$, $\bldu_2$, $\bldu_3$ 
through the network. Assume that the vector $\bldu_3$ is removed, and the 
erroneous vector $\bldz = \bldu_4 + (x_1, 0, \cdots, 0), x_1 \neq 0$ is injected instead. 
At this point, the corresponding vector space under transmission 
is $\langle \bldu_1, \bldu_2, \bldz \rangle$. 
Then, it is plausible that $\bldu_1$, $\bldu_2$ and $\bldz$ propagate further through the network due to network coding. 
To this end, assume that the receiver receives the following linear combinations, $\bldu_1 + \bldz$ and $\bldu_2 + \bldz$. 
Assume also that during the last transmission, the vector $\bldz$ is subjected to a symbol error, 
resulting in $\bldz' = \bldu_4 + (x_1, x_2, 0, \cdots, 0), x_2 \neq 0$. 

The receiver applies the decoder $\cD_{RS}$ on these three vectors, resulting in 
\begin{eqnarray*}
\cD_{RS}(\bldu_1 + \bldz) = \bldu_1 + \bldu_4 \; ; \\
\cD_{RS}(\bldu_2 + \bldz) = \bldu_2 + \bldu_4 \; ; \\
\cD_{RS}(\bldz') = \bldu_4 + \bldu_5 \; . 
\end{eqnarray*}
The subspace received at the destination is 
\[
\tilde{U} = \langle \bldu_1 + \bldu_4, \bldu_2 + \bldu_4, \bldu_4 + \bldu_5 \rangle \; , 
\]
and so the corresponding pre-image under $\cE_\cL$ is given by
\[
\tilde{V} = \langle \blde_1 + \blde_4, \blde_2 + \blde_4, \blde_4 + \blde_5 \rangle \; . 
\]
Observe that $\dim(Z \cap \tilde{V}) = 1$ and that $\blde_1 + \blde_2 \in Z \cap \tilde{V}$, so that the 
subspace distance between $V$ and $\tilde{V}$ is four. Therefore, the subspace decoder $\cD_{\Code}$ may fail when decoding $Z$ from $\tilde{V}$. 
This situation is illustrated in Figure~\ref{fig:confusion}.  

\begin{figure}[ht]
\begin{center}
  \includegraphics[width=0.45\textwidth]{polytope-5.1}
\end{center}
\caption{Situation when the decoder fails. The dimension error vector $\bldz$ should be decoded into $\bldu_4 \in \code$. 
However, a symbol error changes $\bldz$ into $\bldz'$, which is decoded into $\bldu_5 \in \code$. This ambiguity 
increases the dimension of the error space, and causes decoder failure.}
\label{fig:confusion}  
\end{figure}

\subsection{Decoding Strategies for Dimension Gains and Symbol Errors}

To illustrate the difficulty of performing combined symbol and dimension gain error decoding of the code $\cL$, below we discuss some 
alternative decoding strategies. We mention why these strategies, when applied to the problem at hand, do not work. 

\begin{description}
\item[\bf Gaussian eliminations on the orthogonal space.]
$\,$
\newline
Assume that the vector space $V$ is transmitted and $U$ is obtained by a combination of symbol and dimension errors, including dimension gains. 
Then, $U$ can be represented as $U = \cH_k(V) \oplus E$, where $E$ is some error space. 
If there were no symbol errors, the dimension of $E$ would equal the number of dimension gains, which is small. 
However, if symbol errors are present, $E$ takes a more involved form. 

One can try to represent the space $E$ in a particular basis, for example one in which the symbol errors have low weight. 
Ideally, each symbol error would correspond to a vector of weight one in that space. 
If one could accomplish this task, then it may be possible to find all the low weight vectors and remove them, or to puncture the corresponding coordinates. After such a procedure, 
one would ideally be left with only dimension gain vectors. 

A particular difficulty in this scenario is that there are too many different bases for $E$, and it is not immediately clear which basis should be selected. 
And, while in the right basis the symbol error vector will have weight one, in most of the other bases this weight will be large. 
Moreover, the space $E$ can be viewed as a dual code of $\cL$. However, then the problem of finding low-weight vectors becomes similar to the problem of 
finding the smallest weight codeword in the dual code, which is known to be NP hard. Therefore, it is likely that finding the right basis in $E$ is difficult, too. 

\item[\bf Using list-decoding for RS codes.]
$\,$
\newline
One can think about using list-decoding for the code $\code$. Since the covering radius of RS codes is $d-1$, it may happen that the dimension 
error transforms the codeword into a vector at distance $d-1$ from any codeword. Then, even a single symbol error can move this codeword to a different ball of radius $d-1$ around a codeword, 
similarly to the situation depicted in Figure~\ref{fig:confusion}. Since list-decoding can correct only less than $d$ errors, list-decoding cannot recover the original codeword.  

The second problem associated with list decoding is as follows. Even if one could construct a polynomial-size list of all possible codewords before the dimension error took place, 
there would be a different list for each received vector. Since there could be as many as $\ell$ different lists, an exhaustive approach for choosing the right codewords from all the lists 
can require a time exponential in $\ell$. 
\end{description}

%------------------------------------------------------------------------------------------------

\section{Conclusion}

We introduced a new class of subspace codes capable of correcting both dimension errors and symbol errors, termed hybrid codes. For these codes,
we derived upper bounds on the size of the codes and presented an asymptotically constant-optimal concatenated code design method. 
We presented polynomial-time decoding algorithms which are capable of correcting the following error
patterns: 
\begin{itemize}
\item
Dimension losses/gains and symbol erasures; 
\item
Dimension losses and symbol erasures/errors. 
\end{itemize}
We also discussed correction of error patterns that consist of all four types of errors: dimension losses/gains and symbol erasures/errors. 
As we illustrated by an example, the corresponding task is difficult, and is left as an open problem. 

\section{Acknowledgements}

The authors are grateful to Danilo Silva for providing useful and insightful comments about the work
in the manuscript, the reviewers for many comments that improved the presentation of the work, and
to the Associate Editor, Christina Fragouli, for insightful suggestions and for handling the manuscript. 

%---------------------------------------------------------------------------------------------

%

%====================================================================================================================

\begin{thebibliography} {99}

\bibitem{Ahlswede} 
  {R. Ahlswede, N. Cai, S.Y.R. Li, and R.W. Yeung,}
  {``Network information flow,''}
  \textit{IEEE Trans.\ On Inform.\ Theory,} 
  vol.~46, pp.~1204--1216, July~2000.

\bibitem{Cai} 
  {N. Cai, R. W. Yeung,} 
  {``Network coding and error-correction,''}
  \textit{Proc.\ IEEE Inform.\ Theory Workshop (ITW),} 
  Bangalore, India, pp.~119–-122, Oct.~2002.

\bibitem{Zeger} 
  {R. Dougherty, C. Freiling, and K. Zeger,}
  {``Insufficiency of linear coding in network information flow,''}
  \textit{IEEE Trans.\ On Inform.\ Theory,} 
  vol.~51, no. 8, pp.~2745-2759, August~2005.

\bibitem{Etzion-Silberstein} 
  {T. Etzion, N. Silberstein,}
  {``Error-correcting codes in projective spaces via rank-metric codes and Ferrers diagrams,''}
  \textit{IEEE Trans.\ On Inform.\ Theory,} 
  vol.~55, pp.~2909-2919, July~2009.

\bibitem{Etzion-Vardy} 
  {T. Etzion, A. Vardy,}
  {``Error-correcting codes in projective space,''}
  \textit{IEEE Trans.\ On Inform.\ Theory,} 
  vol.~57, pp.~1165–-1173, Feb.~2011.

\bibitem{Gabidulin} 
  {E.M. Gabidulin,} 
  {``Theory of codes with maximal rank distance,''} 
  \textit{Problems of Information Transmission,} 
  vol.~21, pp.~1–-12, July~1985.

\bibitem{Bossert} 
  {E.M. Gabidulin, M. Bossert,} 
  {``Codes for network coding,''}
  \textit{Proc.\ IEEE Intern.\ Symposium on Inform.\ Theory (ISIT),} 
  Toronto, Canada, July~2008.
  
\bibitem{Gadouleau}
  {M. Gadouleau, Z. Yan,}
  {``Constant-rank codes and their connection to constant-dimension codes,''}
  \textit{IEEE Trans.\ On Inform.\ Theory,}
  vol.~56, no. 7, pp.~3207--3216, July 2010. 

\bibitem{Ho}
  {T. Ho, R. K\"otter, M. M\'edard, D. R. Karger, and M. Effros,}
  {``The Benefits of Coding over Routing in a Randomized Setting,''}
  \textit{Proc.\ IEEE Intern.\ Symposium on Inform.\ Theory (ISIT),} 
  Yokohama, Japan, June-July 2003. 

\bibitem{Jafari} 
  {M. Jafari Siavoshani, S. Mohajer, C. Fragouli, and S. Diggavi,}
  {``On the capacity of non-coherent network coding,''}
  \textit{IEEE Trans.\ On Inform.\ Theory,} 
  vol.~57, no.~2, pp.~1046--1066, Feb.~2011.


\bibitem{Katti}
{S. Katti, D.  Katabi, H. Balakrishnan and M. M\'edard,}
  {``Symbol-Level Network Coding for Wireless Mesh Networks,''}
  \textit{ACM SIGCOMM,} 
  Seattle, USA. 

\bibitem{Khaleghi}
  {A. Khaleghi, D. Silva, and F. R. Kschischang,}
  {``Subspace codes,''} 
  \textit{Lecture Notes In Computer Science,} 
  vol.~5921, pp.~1--21, 2009.

\bibitem{Kohnert}
  {A. Kohnert, S. Kurz,}
  {`` Construction of large constant dimension codes with a prescribed minimum distance,''} 
  \textit{Lecture Notes In Computer Science,} 
  vol.~5393, pp.~31--42, 2008.

\bibitem{KK}
  {R. K\"otter, F.R. Kschischang,}
  {``Coding for errors and erasures in random network coding,''}
  \textit{IEEE Trans.\ On Inform.\ Theory,}
  vol.~54, pp.~3579--3591, Aug.~2008.
    
\bibitem{Koetter-Medard}
  {R. K\"otter, M. M\'edard,}
  {``An algebraic approach to network coding,''}
  \textit{IEEE/ACM Transactions on Networking ,}
  vol.~11, no. 5, pp.~782--795, October 2003.

\bibitem{Li}
  {S. Li, R. Yeung, and N. Cai,}
  {``Linear network coding,''}
  \textit{IEEE Trans.\ On Inform.\ Theory,}
  vol.~49, pp.~371--381, 2003.

\bibitem{Felice}
  {F. Manganiello, E. Gorla, and J. Rosenthal,} 
  {``Spread codes and spread decoding in network coding,''}
  \textit{Proc.\ IEEE Intern.\ Symposium on Inform.\ Theory (ISIT),} 
  Toronto, Canada, July~2008.

\bibitem{Roth-book}
  {R.M. Roth,}
  \textit{Introduction to Coding Theory,} 
  Cambridge, UK: Cambridge University Press, 2006.

\bibitem{Roth}
  {R.M. Roth,}
  {``Maximum-rank array codes and their application to crisscross error-correction,''}
  \textit{IEEE Trans.\ Inform.\ Theory,} 
  vol.~37, pp.~328–-336, March~1991.

\bibitem{Silberstein-Vish}
  {N. Silberstein, A. Singh Rawat, and S. Vishwanath,}
  \textit{Error Resilience in Distributed Storage via Rank-Metric Codes,}
  http://arxiv.org/abs/1202.0800.

\bibitem{SKK}
  {D. Silva, F. R. Kschischang, and R. K\"otter,}
  {``A rank-metric approach to error-control in random network coding,''} 
  \textit{IEEE Trans.\ on Inform.\ Theory,} 
  vol.~54, pp.~3951–-3967, Sept.~2008.

\bibitem{Skachek}
  {V. Skachek,}
  {``Recursive code construction for random networks,''} 
  \textit{IEEE Trans.\ on Inform.\ Theory,} 
  vol.~56, pp.~1378–-1382, March~2010.

\bibitem{Trautmann}
  {A.-L. Trautmann, J. Rosenthal,}
  {``New improvements on the echelon-Ferrers construction,''}
  \textit{Proc.\ 19th Intern. Symposium on Math. Theory of Networks and Systems (MTNS),} 
  pp. 405--408, Budapest, Hungary, 2010.

\bibitem{Xia}
  {S.T. Xia, F.W. Fu,}
  {``Johnson type bounds on constant dimension codes,''} 
  \textit{Designs, Codes and Cryptography,} 
  vol.~50, pp.~163–-172, Feb.~2009.

\bibitem{Van-Lint}
  {J.H. van Lint, R.M. Wilson,}
  {\textit A Course in Combinatorics},
   Cambridge, UK: Cambridge University Press, second ed., 2001.

\bibitem{Naini}
  {H. Wang, C. Xing, and R. Safavi-Naini, }
  {``Linear authentication codes: bounds and constructions,''}
  \textit{IEEE Trans.\ On Inform.\ Theory,} 
  vol.~49, pp.~866--873, Apr.~2003. 

\bibitem{Jaggi}
  {Q. Wang, S. Jaggi, and S.Y.R. Li, }
  {``Binary error correcting network codes,''}
  \textit{Proc.\ Inform. Theory Workshop (ITW)},
  pp.~498--502, Paraty, Brazil, 2011. 

  
\end{thebibliography}
\end{document}